\newif\if@restonecol
\begin{document}

\begin{frontmatter}

\title{Extractions: Computable and Visible Analogues of Localizations for Polynomial Ideals}

\author{Ye Liang}
\address{Max-Plank-Institut f\"ur Informatik, 66123 Saarbr\"ucken, Germany}

\ead{wolf39150422@gmail.com}

\begin{abstract}
When studying local properties of a polynomial ideal, one usually needs a theoretic technique called localization. For most cases, in spite of its importance, the computation in a localized ring cannot be algorithmically preformed. On the other hand, the standard basis method is very effective for the computation in a special kind of localized rings, but for a general semigroup order the geometry of the localization of a positive-dimensional ideal is difficult to interpret.

In this paper, we introduce a new ideal operation called extraction. For an ideal $I$ in a polynomial ring $K[x_1,\ldots,x_n]$ over a field $K$, we use another ideal $J$ to control the primary components of $I$ and the result $\beta(I,J)$ is called the extraction of $I$ by $J$. It is still a polynomial ideal and has a concrete geometric meaning in $\bar{K}^n$, i.e., we keep the branches of $\textbf{V}(I) \subset \bar{K}^n$ that intersect with $\textbf{V}(J) \subset \bar{K}^n$ and delete others, where $\bar{K}$ is the algebraic closure of $K$. This is what we mean by visible. On the other hand, we can use the standard basis method to compute a localized ideal corresponding to $\beta(I,J)$ without a complete primary decomposition, and can do further computation in the localized ring such as determining the membership problem of $\beta(I,J)$. Moreover, we prove that extractions are as powerful as localizations in the sense that for any multiplicatively closed subset $S$ of $K[x_1,\ldots,x_n]$ and any polynomial ideal $I$, there always exists a polynomial ideal $J$ such that $\beta(I,J)=(S^{-1}I)^c$.
\end{abstract}

\begin{keyword}
 Semigroup order \sep standard basis \sep control order \sep extraction \sep localization \sep primary decomposition \sep polynomial ideal \sep Krull dimension
\end{keyword}

\end{frontmatter}

\section{Introduction} \label{introduction}

Since localization was introduced to mathematics in the first half of the twentieth century, it has become an indispensable technique in commutative algebra and algebraic geometry. The basic philosophy behind it is simple: By making some elements invertible, several components of an ideal can be deleted and others will be kept, so that one can investigate the local properties of this ideal. Though localization is important in theoretic studies, there is no effective methods to compute it for nontrivial cases until a splendid piece of work of \cite{Mora82}.

The method of Mora relates to a notion called standard basis which was introduced by \cite{Hironaka64} and \cite{Buchberger65,Buch06} independently. Hironaka considered the local cases but did not provide algorithms. Buchberger presented a famous algorithm, i.e. the Buchberger algorithm, for global orders but there are no localizations in these cases. Mora provided the first algorithm to compute a standard basis for a local order. He borrowed the basic idea of the Buchberger algorithm and replaced the division process in the Buchberger algorithm by the so called Mora normal form algorithm. After theoretical and practical improvements by \cite{Lazard83}, \cite{Robbiano1985}, \cite{Graebe94}, \cite{Greuel96} and others, one can now effectively compute a standard basis for any semigroup order in a computer algebra software, for instance \textsc{Singular} designed by \cite{DGPS}. However, for a general semigroup order and a positive-dimensional polynomial ideal, the geometric meaning of the localization is not as clear as for a local or global order, though semigroup orders have several applications such as in the computation of Hilbert-Samuel functions in \cite{Mora95}, some local operations in \cite{Alonso91} and other stuffs.

In this paper, we introduce a new ideal operation in Definition \ref{Def:CPart} called extraction. Given two ideals $I$ and $J$ in a polynomial ring $A:=K[x_1,\ldots,x_n]$ where $K$ is a field, we can define another ideal in the following way. Take a minimal primary decomposition $I= \cap _{i=1}^kQ_i$ such that $Q_i+J \neq A$ for $i=1,\ldots,m$ and $Q_i+J=A$ for $i=m+1,\ldots,k$. Then, we call $\beta(I,J):=\cap _{i=1}^mQ_i$ the extraction of $I$ by $J$. This notion is well defined and has concrete geometric meaning in $\bar{K}^n$ where $\bar{K}$ is the algebraic closure of $K$, i.e., we only extract the components of $\textbf{V}(I) \subset \bar{K}^n$ that meet $\textbf{V}(J) \subset \bar{K}^n$ and delete other components. This process is somewhat similar to a localization but the operation is opposite.

To compute a localized ideal corresponding to the extraction $\beta(I,J)$ by the standard basis method, we need first to study the geometry of a special kind of semigroup orders, i.e. control orders in Definition \ref{Def:COrder}. As in \cite{Liang14} for zero-dimensional cases, for a control order $>$, the local variables determine an ideal $J:=\langle x_1,\ldots,x_r \rangle$. Then we claim in Theorem \ref{decomposition} that the contraction of $\textup{Loc}_>(I)$ is just $\beta(I,J)$. For a general $J$ we need a lifting by adding new variables to the polynomial ring and transform the general ideal $J$ to the simple case that we just deal with (cf. Theorem \ref{Th:Projection}).

Comparing to general localizations, the advantage of the notion extraction is that we can not only see the geometry in $\bar{K}^n$ but also can effectively and directly (without a complete primary decomposition of $I$, especially do not need to compute a Gr\"obner basis of $I$) compute a corresponding ideal to it in a localized ring by the standard basis method, so that we can study some properties such as the membership problem of $\beta(I,J)$ (cf. Corollary \ref{Corollary:membership} and Remark \ref{Remark:membership}). Moreover, we can prove in Theorem \ref{Th:powerful} that extractions are as powerful as general localizations in the sense that for a contraction $L$ of any localization of a polynomial ideal $I$, there always exists a polynomial ideal $J$ such that $L=\beta(I,J)$. When $I$ is zero-dimensional, we can even work out $\beta(I,J)$ from the localized ring w.r.t. a semigroup order in \cite{Liang14}. But if $I$ is positive-dimensional, then we have no general algorithm at the moment to compute the extraction directly from this localized ring, though we can always compute it by definition if we do not consider the efficiency.

The rest contents are structured as follows. In Section \ref{Sec:Pre}, we list some basic materials. Section \ref{Sec:Classify} is devoted to introducing a special kind of semigroup orders called control orders and to studying their geometric meanings. In Section \ref{Sec:ControlVarieties}, we introduce the notion extraction of a polynomial ideal by another ideal. Then, we study a relation on dimensions between control orders and contractions of ideals in localized rings in Section \ref{Sec:Dimension}, and finally study some basic properties of extractions in Section \ref{Sec:Basic}.

\newtheorem{theorem}{Theorem}[section]
\newtheorem{proposition}[theorem]{Proposition}
\newtheorem{lemma}[theorem]{Lemma}
\newtheorem{definition}[theorem]{Definition}
\newtheorem{observation}[theorem]{Observation}
\newtheorem{corollary}[theorem]{Corollary}
\newtheorem{problem}[theorem]{Problem}
\newtheorem{remark}[theorem]{Remark}
\newtheorem{example}[theorem]{Example}
\newtheorem{assumption}[theorem]{Assumption}

\section{Preliminaries}\label{Sec:Pre}
Let $K$ be a field, $\bar{K}$ be the algebraic closure of $K$, $A:=K[x_1,\ldots,x_n]$ and $\textup{T}^{\{x_1,\ldots,x_n\}}:=\{x^{\alpha}:\alpha \in \mathbb{Z}^n_{\geq 0} \}$.  A semigroup order $>$ on $\mathbb{Z}^n_{\geq 0}$, or on $\textup{T}^{\{x_1,\ldots,x_n\}}$ in $A$ is a total order compatible with multiplication of monomials. Local orders and global orders are semigroup orders satisfying the conditions that every variable is smaller than $1$ and every variable is larger than $1$, respectively. If a semigroup order is not local or global, then we call it a mixed semigroup order or a mixed order. It has at least one local variable and at least one global variable. Let $>$ be a semigroup order in $A$ and let $S=\{1+g \in A:g=0 \ \textup{or} \ \textsc{lt}(g)<1\}$ where $\textsc{lt}(g)$ is the leading term of $g$ w.r.t. $>$. The localization of $A$ w.r.t. $>$ is the ring $\textup{Loc}_{>}(A)=S^{-1}A=\{f/(1+g): f \in A, \ 1+g \in S\}$. Let $I \subset \textup{Loc}_{>}(A)$ be an ideal. A standard basis of $I$ w.r.t. $>$ is a set $\{g_1,\ldots,g_t\} \subset I$ such that $\langle \textsc{lt}(I) \rangle = \langle \textsc{lt}(g_1),\ldots,\textsc{lt}(g_t) \rangle$. Here $\textsc{lt}$ is a generalized version of the leading term function for nonzero elements in $\textup{Loc}_{>}(A)$. See \cite{Cox05} for the details.

Let $f: A \rightarrow S^{-1}A, \ a \mapsto a/1$ be a ring homomorphism where $S$ is a multiplicatively closed subset of $A$. For an ideal $I \subset A$, its extension $I^e$ in $S^{-1}A$ is $I(S^{-1}A)=S^{-1}I$. For an ideal $J \subset S^{-1}A$, its contraction $J^c$ in $A$ is $f^{-1}(J)$. The following theorem is basic and can be found in \cite{Atiyah1969}.
\begin{theorem}\label{Atiyah}
Let $S$ be a multiplicatively closed subset of $A$, and let $I$ be an ideal. Let $I=\cap^k_{i=1}Q_i$ be a minimal primary decomposition of $I$. Let $P_i$ be the radical of $Q_i$ and suppose the $Q_i$ numbered so that $S$ meets $P_{m+1},\ldots,P_k$ but not $P_1,\ldots,P_m$. Then $$ S^{-1}I=\cap^m_{i=1}S^{-1}Q_i, \quad I^{ec}=(S^{-1}I)^c=\cap^m_{i=1}Q_i$$ and these are minimal primary decompositions.
\end{theorem}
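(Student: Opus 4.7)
The plan is to reduce the theorem to two facts about a single primary ideal $Q \subset A$ with radical $P$, and then invoke the commutation of both $S^{-1}(-)$ and $(-)^c$ with finite intersections.

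First I would establish the preliminary facts. \textbf{(a)} If $S \cap P = \emptyset$, then $S^{-1}Q$ is primary in $S^{-1}A$ with radical $S^{-1}P$, and $(S^{-1}Q)^c = Q$. To verify primariness: if $(a/s)(b/t) \in S^{-1}Q$ with $(a/s) \notin S^{-1}Q$, clear denominators to obtain $uab \in Q$ for some $u \in S$; since $u \notin P = \sqrt{Q}$ and $Q$ is primary, $ab \in Q$, and a similar clearing argument combined with $(a/s) \notin S^{-1}Q$ forces $a \notin Q$, so some power $b^n \in Q$, i.e.\ $(b/t)^n \in S^{-1}Q$. For the contraction, the inclusion $Q \subset (S^{-1}Q)^c$ is automatic, while $a \in (S^{-1}Q)^c$ gives $sa \in Q$ for some $s \in S$, whence $a \in Q$ because $s \notin P$. \textbf{(b)} If $S \cap P \neq \emptyset$, pick $s \in S \cap P$ and raise to a power landing in $Q$; the resulting unit of $S^{-1}A$ forces $S^{-1}Q = S^{-1}A$.

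Next I would combine (a) and (b) with the identity $S^{-1}\bigl(\bigcap_i Q_i\bigr) = \bigcap_i S^{-1}Q_i$, valid for finite intersections, to conclude
\[
S^{-1}I \;=\; \bigcap_{i=1}^{k} S^{-1}Q_i \;=\; \bigcap_{i=1}^{m} S^{-1}Q_i,
\]
discarding the terms with $i > m$ since they equal the entire ring by (b). Contracting both sides and using that $(-)^c$ also commutes with finite intersections, together with $(S^{-1}Q_i)^c = Q_i$ from (a), yields $(S^{-1}I)^c = \bigcap_{i=1}^{m} Q_i$. Minimality of both decompositions then follows because the radicals $S^{-1}P_1,\ldots,S^{-1}P_m$ are distinct (their contractions $P_1,\ldots,P_m$ already are) and irredundance is inherited from the original decomposition by contracting any proposed simplification back to $A$. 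I do not anticipate a serious obstacle; the delicate point is the simultaneous verification in (a) of primariness, radical identification, and contraction recovery, all of which rest on the same principle that elements of $S$ act as non-zero-divisors modulo a primary ideal whose radical avoids $S$.
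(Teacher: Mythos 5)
The paper does not give its own proof of this theorem; it is stated as a known result and cited from \cite{Atiyah1969}, where it appears as Propositions 4.9 and 4.10. Your proof is correct and is essentially the standard Atiyah--Macdonald argument: treat a single primary ideal $Q$ with radical $P$ according to whether $S\cap P$ is empty (in which case $S^{-1}Q$ is $S^{-1}P$-primary and contracts back to $Q$) or nonempty (in which case $S^{-1}Q=S^{-1}A$), and then use that $S^{-1}(-)$ and $(-)^c$ commute with finite intersections; your sketch of minimality, contracting a purported redundancy back to $A$ to contradict minimality of the original decomposition, is also the standard step.
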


See \cite{Eisenbud94} for the next proposition. We will use it and the above theorem in the proof of Theorem \ref{decomposition}.

\begin{proposition} \label{Eisenbud}
Let $R$ be a Noetherian domain. If $f \in R$ and $f=u \prod p_i^{e_i}$, in such a way that u is a unit of $R$, the $p_i$ are primes generating distinct ideals $\langle p_i \rangle$, and each $e_i$ is a positive integer, then $\langle f \rangle =\cap \langle p_i^{e_i} \rangle$ is a minimal primary decomposition of $\langle f \rangle$.
\end{proposition}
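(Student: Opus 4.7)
The plan is to split the assertion into three parts: (i) each $\langle p_i^{e_i}\rangle$ is $\langle p_i\rangle$-primary, (ii) $\langle f\rangle=\bigcap_i\langle p_i^{e_i}\rangle$, and (iii) the decomposition is minimal. For (i), since $\langle p_i\rangle$ is prime, the identity $\sqrt{\langle p_i^{e_i}\rangle}=\langle p_i\rangle$ is immediate. The primary condition I would verify from the definition: if $ab\in\langle p_i^{e_i}\rangle$ and $a\notin\langle p_i\rangle$, then applying the implication ``$p_i\mid xy$ and $p_i\nmid x$ implies $p_i\mid y$'' (the primality of $p_i$ as an element) a total of $e_i$ times, by a short induction on $e_i$, yields $p_i^{e_i}\mid b$.

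For (ii), the inclusion $\langle f\rangle\subseteq\bigcap_i\langle p_i^{e_i}\rangle$ follows immediately from $f=u\prod p_i^{e_i}$ by reading off that $p_i^{e_i}\mid f$. The reverse inclusion rests on the following key lemma: if $p$ and $q$ are prime elements of a domain with $\langle p\rangle\neq\langle q\rangle$, then $p\nmid q^e$ for every $e\geq 1$. To prove the lemma, if $p\mid q$, write $q=ps$; primality of $q$ applied to $ps$ forces either $s$ or $p$ to be a unit, and in both cases we conclude $\langle p\rangle=\langle q\rangle$, a contradiction. Primality of $p$ then extends this to $p\nmid q^e$. Given the lemma and any $g\in\bigcap_i\langle p_i^{e_i}\rangle$, I would peel off factors inductively: write $g=p_1^{e_1}h_1$; then $p_2^{e_2}\mid g$ combined with $p_2\nmid p_1^{e_1}$ forces $p_2^{e_2}\mid h_1$, so $g=p_1^{e_1}p_2^{e_2}h_2$; iterating yields $\prod_i p_i^{e_i}\mid g$, so $g\in\langle f\rangle$ since $u$ is a unit.

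For (iii), the radicals $\langle p_i\rangle$ are distinct by hypothesis, so it only remains to verify irredundancy: $\prod_{j\neq i}p_j^{e_j}$ lies in $\bigcap_{j\neq i}\langle p_j^{e_j}\rangle$ but not in $\langle p_i^{e_i}\rangle$, since otherwise $p_i\mid \prod_{j\neq i}p_j^{e_j}$, and primality of $p_i$ would give $p_i\mid p_j$ for some $j\neq i$, hence $\langle p_i\rangle=\langle p_j\rangle$ by the same lemma. The main subtlety is the reverse inclusion in (ii): one is tempted to appeal to comaximality of the $\langle p_i^{e_i}\rangle$ and argue by the Chinese Remainder Theorem, but this fails in general Noetherian domains—in $K[x,y]$, for instance, the ideals $\langle x\rangle$ and $\langle y\rangle$ are distinct principal primes yet not comaximal—so the argument must be carried out element-wise, using only that each $p_i$ is a prime element of the domain $R$, with Noetherianity playing no role beyond guaranteeing that primary decompositions make sense.
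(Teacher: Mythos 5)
The paper offers no proof of this proposition; it simply cites Eisenbud's textbook (``See \cite{Eisenbud94} for the next proposition''), so there is no in-paper argument to compare against. Your proof is correct and self-contained. The verification that $\langle p_i^{e_i}\rangle$ is $\langle p_i\rangle$-primary by iterated use of element-primality, the key lemma that distinct prime elements satisfy $p\nmid q^e$ (via irreducibility of $q$), the element-wise peeling to get $\prod p_i^{e_i}\mid g$, and the irredundancy witness $\prod_{j\neq i}p_j^{e_j}$ all check out. One small expository looseness: in the lemma's proof, the case where $p$ is a unit contradicts $p$ being prime directly, rather than yielding $\langle p\rangle=\langle q\rangle$ as your phrasing suggests, but this does not affect the argument. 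Your closing remarks are also apt: the $\langle p_i^{e_i}\rangle$ need not be pairwise comaximal (so the Chinese Remainder Theorem is unavailable and the argument must be element-wise), and Noetherianity is not actually used since the decomposition is exhibited rather than asserted to exist.
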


The following theorem is from \cite{Liang14}. It looks like a special case of \mbox{Theorem \ref{decomposition}}, but is not. It is a stronger conclusion in zero-dimensional cases. We use it in proving \mbox{Theorem \ref{Lemma:dim}}, an equality property of the dimensions of ideals.

\begin{theorem}\label{DecForZero}
Let $K$ be an algebraically closed field, $>$ be a semigroup order in $A$ with $x_{j_1}<1,\ldots, x_{j_k} <1$ and $x_{j_{k+1}}>1,\ldots,x_{j_n} >1$ where $(j_1,\ldots,j_n)$ is a permutation of $(1,\ldots,n)$. Let $S=\{1+g:g=0 \vee \textsc{lt}(g)<1,g \in A\}$. Let $I \subset A$ be a zero-dimensional polynomial ideal and $I=\cap^k_{i=1}Q_i$ be its minimal primary decomposition. Let $P_i= \langle x_1-a_{i1},\ldots,x_n-a_{in} \rangle$ be the radical of $Q_i$ and suppose the $Q_i$ numbered so that $a_{ij_1}=a_{ij_2}=\cdots=a_{ij_k}=0$ for and only for the first $m$ $Q_i$.  Then, $S^{-1}I=\cap^m_{i=1}S^{-1}Q_i$ and $I^{ec}=(S^{-1}I)^c=\cap^m_{i=1}Q_i$ are minimal primary decompositions.
\end{theorem}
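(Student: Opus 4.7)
The plan is to reduce Theorem \ref{DecForZero} to Theorem \ref{Atiyah}. By Theorem \ref{Atiyah}, the conclusion holds as soon as I show that, under the hypotheses, $S\cap P_i=\emptyset$ for $i\le m$ and $S\cap P_i\ne\emptyset$ for $i>m$. Because $K=\bar K$ and $I$ is zero-dimensional, each $P_i$ is the maximal ideal $\langle x_1-a_{i1},\ldots,x_n-a_{in}\rangle$ whose unique zero is the point $(a_{i1},\ldots,a_{in})$; an element of $A$ lies in $P_i$ iff it vanishes at this point.

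For the easy direction, I would handle $i>m$ first. By definition of $m$, there is some $l\le k$ with $a_{i,j_l}\ne0$. The element $h:=1-x_{j_l}/a_{i,j_l}$ is a nonzero scalar multiple of $x_{j_l}-a_{i,j_l}\in P_i$, so $h\in P_i$. On the other hand $h=1+g$ with $g=-x_{j_l}/a_{i,j_l}$, whose leading term is a nonzero scalar times the local variable $x_{j_l}<1$, so $h\in S$. Hence $S\cap P_i\ne\emptyset$.

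The core of the proof is the opposite direction ($i\le m$), and it rests on the following monomial observation: any $x^\alpha<1$ in the semigroup order must involve at least one local variable $x_{j_1},\ldots,x_{j_k}$ with positive exponent. Indeed, if $x^\alpha$ only involves the globals $x_{j_{k+1}},\ldots,x_{j_n}$, then since each such variable is $>1$, iterated compatibility of $>$ with multiplication forces $x^\alpha\ge 1$, contradicting $x^\alpha<1$. Now suppose $1+g\in P_i\cap S$ with $\textsc{lt}(g)<1$ (the case $g=0$ is excluded since $1\notin P_i$). Every monomial of $g$ is $\le\textsc{lt}(g)<1$, hence every such monomial involves a local variable. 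For $i\le m$ all local coordinates $a_{i,j_1},\ldots,a_{i,j_k}$ are zero, so each monomial of $g$, and therefore $g$ itself, vanishes at $(a_{i1},\ldots,a_{in})$. Consequently $(1+g)(a_{i1},\ldots,a_{in})=1\ne 0$, contradicting $1+g\in P_i$. Therefore $S\cap P_i=\emptyset$, completing the hypotheses of Theorem \ref{Atiyah}.

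The main obstacle to anticipate is making the monomial lemma airtight from the bare semigroup-order axioms (total order, compatibility with multiplication) without appealing to any specific order structure; once this lemma is isolated, the rest of the argument is a clean plug-in to Theorem \ref{Atiyah}, and the fact that the resulting decompositions are \emph{minimal} is also inherited directly from that theorem.
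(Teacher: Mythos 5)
Your argument is correct. Note that the paper does not actually prove Theorem~\ref{DecForZero}; it is quoted from the author's earlier work \cite{Liang14}, so there is no in-text proof to compare against. Your route---verify the hypothesis of Theorem~\ref{Atiyah} by checking $S\cap P_i=\emptyset$ exactly for $i\le m$, using the fact (valid since $K=\bar K$ and each $P_i$ is a rational maximal ideal) that membership in $P_i$ is equivalent to vanishing at $(a_{i1},\ldots,a_{in})$---is the natural one, and all the steps hold up. The monomial observation you isolate (every $x^\alpha<1$ must carry a local variable with positive exponent) follows immediately from the semigroup-order axioms: a pure global monomial is a product of variables each $>1$, and compatibility with multiplication plus transitivity gives $x^\alpha\ge 1$ by induction on the total degree, with equality only when $\alpha=0$. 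With that in hand, for $i\le m$ every monomial of $g$ (each being $\le\textsc{lt}(g)<1$) vanishes at the point because all local coordinates $a_{i,j_1},\ldots,a_{i,j_k}$ are zero, so $1+g$ evaluates to $1\ne 0$ and cannot lie in $P_i$; and for $i>m$ the element $1-x_{j_l}/a_{i,j_l}$ with $a_{i,j_l}\ne 0$ exhibits a point of $S\cap P_i$. Minimality of the resulting decompositions is indeed inherited from Theorem~\ref{Atiyah}. The only cosmetic remark is that the statement reuses the symbol $k$ both for the number of local variables and for the number of primary components; you implicitly disambiguate by reserving $l$ for the variable index and $i$ for the component index, which is fine.
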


When proving Theorem \ref{Th:powerful}, we need the following concept which can be found in \cite{Kredel88}.

\begin{definition}[Strongly Independent Sets] \label{SIset}
Let $>$ be a global order on $\mathbb{Z}^n_{\geq 0}$ and $I$ be an ideal in $A$. A subset $u \subset x=\{x_1,\ldots,x_n\}$ is called a strongly independent set mod $I$ if $\langle \textsc{lt}(I) \rangle \cap K[u]=\{ 0 \}$.
\end{definition}

There is indeed a notion of independent set (cf. \cite{ Kredel88,Graebe93,Graebe95,Greuel07}), but we do not need it in this paper. The following proposition is an immediate corollary of Corollary 5.3.14 and Theorem 3.5.1(6) in \cite{Greuel07}. We will need it in proving Theorem \ref{Th:powerful}. The strongly independent sets can be computed by the \textsc{Singular} command \textsf{indepSet}.

\begin{proposition} \label{Prop:StronglyIndependentSet}
For an ideal $I \subset A$ and a global degree order $>$, the Krull dimension $\dim(A/I)$ is the maximal possible size of a strongly independent set mod $I$.
\end{proposition}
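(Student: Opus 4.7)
My plan is to deduce the proposition by combining two ingredients, setting $M := \langle \textsc{lt}(I) \rangle$.  First I would invoke the fact that for a global degree order one has $\dim(A/I) = \dim(A/M)$.  Second I would invoke the combinatorial formula valid for any monomial ideal $M \subset A$,
\[
\dim(A/M) \;=\; \max\bigl\{\,|u| \;:\; u \subset \{x_1,\ldots,x_n\},\; M \cap K[u] = \{0\}\,\bigr\}.
\]
Granting both, specializing this formula to $M = \langle \textsc{lt}(I)\rangle$ yields exactly the maximum size of a strongly independent set mod $I$ in the sense of Definition \ref{SIset}, so the proposition follows.

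For the first equality, my approach would be via Hilbert functions: a one-parameter flat family associated to a weight vector refining $>$ degenerates $A/I$ to $A/M$, and flatness preserves the Hilbert polynomial of the graded quotient; since the Krull dimension of a graded quotient of $A$ equals one plus the degree of its Hilbert polynomial, the equality follows.  This is the content of Corollary 5.3.14 in \cite{Greuel07}.

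For the combinatorial formula, I would argue that the minimal primes of a monomial ideal $M$ are themselves generated by subsets of variables, so each has the form $P_u = \langle x_i : x_i \notin u\rangle$ for some $u \subset \{x_1,\ldots,x_n\}$.  Such a $P_u$ is minimal over $M$ exactly when $u$ is maximal with $M \cap K[u] = \{0\}$, and $\dim(A/P_u) = |u|$; taking the maximum over minimal primes of $M$ gives the claimed formula.  This is Theorem 3.5.1(6) in \cite{Greuel07}.

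The real obstacle is the first equality rather than the combinatorial one: the combinatorial step is elementary, whereas the preservation of Krull dimension under $I \mapsto \langle\textsc{lt}(I)\rangle$ genuinely needs the flat-family / affine Hilbert-function machinery sketched above, which is why the paper records it as a quoted result from \cite{Greuel07} instead of proving it in line.
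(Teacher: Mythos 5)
Your proposal matches the paper's argument exactly: the paper likewise derives the proposition by citing the dimension-invariance result $\dim(A/I)=\dim(A/\langle\textsc{lt}(I)\rangle)$ for degree orders (Corollary 5.3.14 in \cite{Greuel07}) together with the combinatorial dimension formula for monomial ideals (Theorem 3.5.1(6) in \cite{Greuel07}), then observes that the latter specialized to $\langle\textsc{lt}(I)\rangle$ is precisely the maximal size of a strongly independent set mod $I$. The background sketch you supply for why the two cited facts hold is sound, and the paper itself offers no more than the bare citations.
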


The proposition below can be found in \cite{Greuel07} as Theorem 3.5.1(1). It will be used in the proof of Theorem \ref{Lemma:dim}.

\begin{proposition} \label{Prop:A}
The Krull dimension of $A$ is $n$ and every maximal chain of prime ideals in $A$ have the same length $n$.
\end{proposition}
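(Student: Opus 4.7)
The plan is to establish both parts of the proposition using Noether normalization as the key lever. The proof splits naturally into (a) exhibiting a chain of length $n$, (b) proving no chain can be longer, and (c) upgrading this to the statement about maximal chains.

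For (a), the explicit chain
\[ (0) \subsetneq \langle x_1 \rangle \subsetneq \langle x_1, x_2 \rangle \subsetneq \cdots \subsetneq \langle x_1, \ldots, x_n \rangle \]
works: each quotient $A/\langle x_1, \ldots, x_i \rangle \cong K[x_{i+1}, \ldots, x_n]$ is an integral domain, so each ideal in the chain is prime. Thus $\dim A \geq n$.

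For (b) and (c), I would first reduce to a lemma for finitely generated $K$-algebra domains $B$: for every prime $\mathfrak{p} \subset B$,
\[ \operatorname{height}(\mathfrak{p}) + \dim(B/\mathfrak{p}) = \dim B. \]
The proof of this uses Noether normalization to embed $B$ as a finite extension of some $K[y_1,\ldots,y_d]$, combined with going-up/going-down for integral extensions over a normal domain to transfer chains between $B$ and its normalization pullback. Once this identity is in hand, consider any maximal chain $(0) = \mathfrak{p}_0 \subsetneq \mathfrak{p}_1 \subsetneq \cdots \subsetneq \mathfrak{p}_k$ in $A$ (it must start at $(0)$ since $A$ is a domain). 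Because it cannot be refined, $\mathfrak{p}_k$ must be maximal; by the Nullstellensatz $A/\mathfrak{p}_k$ is a field, so $\dim(A/\mathfrak{p}_k)=0$, and consecutive primes leave no room, forcing $k = \operatorname{height}(\mathfrak{p}_k)$. Applying the displayed identity to $\mathfrak{p}=\mathfrak{p}_k$ then yields $k = \dim A$. Combined with (a), this gives $\dim A = n$ and proves every maximal chain has length exactly $n$.

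The main obstacle is the dimension identity $\operatorname{height}(\mathfrak{p}) + \dim(B/\mathfrak{p}) = \dim B$, i.e.\ catenarity plus equidimensionality of finitely generated $K$-algebra domains. The cleanest path is: apply Noether normalization to $B$ to get a finite injection $K[y_1,\ldots,y_d] \hookrightarrow B$; reduce to showing the identity for $K[y_1,\ldots,y_d]$ (by induction on $d$, contracting $\mathfrak{p}$ and using that finite extensions preserve dimension and primes lift via going-up); and for the polynomial ring itself use the dimension formula $\dim K[y_1,\ldots,y_d] = d$ together with the fact that if $\mathfrak{q}$ is a height-one prime in $K[y_1,\ldots,y_d]$ then $K[y_1,\ldots,y_d]/\mathfrak{q}$ has dimension $d-1$, which follows because $K[y_1,\ldots,y_d]$ is a UFD so $\mathfrak{q} = \langle f \rangle$ is principal, and then an inductive Noether normalization handles the quotient. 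Everything else is a straightforward bookkeeping combination of these ingredients with the chain exhibited in (a).
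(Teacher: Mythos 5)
The paper does not give its own proof of this proposition; it simply cites Theorem~3.5.1(1) of Greuel and Pfister (2007), and your proposal is the standard Noether normalization argument found in that and similar references, so there is no methodological divergence to report. However, there is a genuine gap in the step that connects the dimension identity to the claim about maximal chains. You assert that ``consecutive primes leave no room, forcing $k = \operatorname{height}(\mathfrak{p}_k)$,'' and then apply the identity once at $\mathfrak{p} = \mathfrak{p}_k$. The inequality $\operatorname{height}(\mathfrak{p}_k) \geq k$ is free from the chain you have in hand, but the reverse inequality is precisely the catenarity assertion you are trying to establish and does not follow merely from the chain being unrefinable: in a non-catenary Noetherian domain there exist saturated chains from $(0)$ to a maximal ideal of different lengths, so ``no room to insert a prime'' does not by itself pin down the height. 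Applying your identity once at $\mathfrak{p}_k$ gives $\operatorname{height}(\mathfrak{p}_k) = \dim A$, which is not the same as $k = \dim A$.

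To close the gap, use your lemma in each successive quotient. Each $A/\mathfrak{p}_i$ is again a finitely generated $K$-algebra domain, so your dimension identity applies there; since the chain is saturated, $\bar{\mathfrak{p}}_{i+1}$ has height one in $A/\mathfrak{p}_i$, and the identity yields $\dim(A/\mathfrak{p}_{i+1}) = \dim(A/\mathfrak{p}_i) - 1$. Telescoping from $\mathfrak{p}_0 = (0)$ to $\mathfrak{p}_k$ maximal (where $\dim(A/\mathfrak{p}_k)=0$) gives $k = \dim(A/\mathfrak{p}_0) = \dim A$, and combined with your explicit chain in~(a) this is $n$. One minor wording slip as well: in reducing the dimension identity from $B$ to its Noether normalization $K[y_1,\ldots,y_d]$, the height comparison $\operatorname{height}_B(\mathfrak{p}) = \operatorname{height}_{K[y]}(\mathfrak{p}\cap K[y])$ rests on going-down over the normal base $K[y_1,\ldots,y_d]$ together with incomparability, so ``primes lift via going-up'' in your parenthetical is not the theorem actually doing the work there.
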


\section{Geometry of Control Orders} \label{Sec:Classify}

In this section, we first give the concept of control order, and then show the effect of a control order on controlling primary decompositions of polynomial ideals in \mbox{Theorem \ref{decomposition}}, where the geometry of a control order can be easily seen.

\begin{definition}[Control Orders]\label{Def:COrder}
Let $k$ be a non-negative integer with $k \leq n$. Let $>$ be a semigroup order such that $x_{j_1}<1,\ldots,x_{j_k}<1$ and $x_{j_{k+1}}>1,\ldots,x_{j_n}>1$ where $(j_1,\ldots,j_n)$ is a permutation of $(1,\ldots,n)$. If for any $t \in \textup{T}^{\{x_1,\ldots,x_n\}}$ we have that $x_j |t$ for some $j \in \{j_1,\ldots,j_k\}$ implies $t<1$, then we call $>$ a control order.
\end{definition}

By definition, local orders and global orders are control orders. If we take a local order $>_1$ on $\textup{T}^{\{x_{j_1},\ldots,x_{j_k}\}}$ and a global order $>_2$ on $\textup{T}^{\{x_{j_{k+1}},\ldots,x_{j_n}\}}$, then the block order $[>_1,>_2]$ is also a control order. But a control order is not necessary to be such a form.

\begin{example} \label{Example:COrder}
For a matrix $$M=\left( \begin{array}{ccc}
              -1 & -1 & 0 \\
               0 &  0 & 1 \\
               0 &  1 & 0
               \end{array} \right), $$
 there exists a semigroup order $>_M$ corresponding to it by the work of \cite{Robbiano1985}, since it is a total order, i.e., for any two tuples $u,v \in \mathbb{Z}^3_{\geq 0}$ we have $Mu^\textsc{t}=Mv^\textsc{t}$ implies $u=v$ where $Mu^\textsc{t}$ and $Mv^\textsc{t}$ are multiplications between matrices and column vectors. By the first two rows of $M$ we know that $>_M$ is a control order. Suppose the three variables are $x,y$ and $z$. Consider the two terms $x^2yz^2$ and $xy^2z$. We find that $x^2yz^2 >_M xy^2z$ which is determined by the second row of $M$, but $x^2y <_M xy^2$ by the third row. Therefore, $>_M$ is not a block order determined by a local order and a global order as shown above.
\end{example}

 A characterization of all the control orders is presented as Theorem \ref{Th:Matrix} at the end of this section. We can see in Corollary \ref{Corollary:COrder} that if only one variable is local under a control order, then this order is such a block order.

\begin{lemma} \label{Lemma:S}
For a semigroup order $>$ on $\textup{T}^{\{x_1,\ldots,x_n\}}$ with local variables $x_{j_1},\ldots,x_{j_k}$, we have that  $>$ is a control order if and only if $S:=\{1+g \in A:g=0 \ \textup{or} \ \textsc{lt}(g)<1\}=\{1+g:g \in \langle x_{j_1}, \dots, x_{j_k} \rangle\}$.
\end{lemma}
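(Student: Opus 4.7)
The plan is to prove both inclusions/directions separately, and the only nontrivial preliminary fact to pin down is that a monomial built from global variables alone (and not equal to $1$) is strictly greater than $1$ in any semigroup order with $x_{j_{k+1}},\ldots,x_{j_n}>1$. I would establish this up front: writing such a nontrivial monomial as a product $x_{j_{k+1}}^{a_{k+1}}\cdots x_{j_n}^{a_n}$ with some $a_i>0$, multiplying the inequality $x_{j_s}>1$ successively by $x_{j_\ell}^{a_\ell}\geq 1$ (using compatibility of $>$ with multiplication) yields that the whole product exceeds $1$.

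For the forward direction ($\Rightarrow$), assume $>$ is a control order, and I would prove the two inclusions of $S=\{1+g:g\in\langle x_{j_1},\ldots,x_{j_k}\rangle\}$. For $\subseteq$, take $1+g\in S$ with $g\neq 0$. Every term $t$ of $g$ satisfies $t\leq\textsc{lt}(g)<1$; by the preliminary fact, $t$ cannot consist only of global variables (and cannot be $1$), so $t\in\langle x_{j_1},\ldots,x_{j_k}\rangle$; summing gives $g$ in that ideal. For $\supseteq$, let $g\in\langle x_{j_1},\ldots,x_{j_k}\rangle$ with $g\neq 0$. Each term of $g$ is divisible by some $x_{j_i}$ with $i\leq k$, and the defining property of a control order then gives every such term $<1$; hence $\textsc{lt}(g)<1$ and $1+g\in S$. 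This inclusion is the place where the control-order condition is actually used, as opposed to merely the local/global sign pattern of the variables.

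For the reverse direction ($\Leftarrow$), assume the equality of sets. Given any monomial $t\in\textup{T}^{\{x_1,\ldots,x_n\}}$ divisible by some $x_{j_i}$ with $i\leq k$, we have $t\in\langle x_{j_1},\ldots,x_{j_k}\rangle$, so $1+t$ lies in the right-hand side and therefore in $S$. Since $t$ is a nonzero monomial, $\textsc{lt}(t)=t$, so the defining condition of $S$ forces $t<1$, which is exactly the defining property of a control order.

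The main obstacle is really just the preliminary fact about monomials in global variables; once that is in hand the two directions are short bookkeeping arguments. I would also note explicitly in the write-up that the $\subseteq$ inclusion in the forward direction relies only on the variables being local/global, while the control-order property is genuinely needed for the $\supseteq$ inclusion, to forestall any confusion about where the hypothesis is used.
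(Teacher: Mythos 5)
Your argument is correct and follows essentially the same route as the paper's proof: you establish the unconditional inclusion $S\subset\{1+g:g\in\langle x_{j_1},\dots,x_{j_k}\rangle\}$ from the fact that a term $<1$ must involve a local variable, and then identify the reverse inclusion with the control-order condition. The paper packages this as a single chain of equivalences rather than two separate directions, but the substance (including your explicitly stated ``preliminary fact'' that a nontrivial monomial in global variables alone exceeds $1$) is identical.
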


\begin{proof}
 For a polynomial $g \in A \setminus \{0\}$ and a semigroup order $>$, we have that $\textsc{lt}(g)<1$ $\Longrightarrow$ any term of $g$ is smaller than $1$ $\Longrightarrow$ any term of $g$ can be divided by a local variable $\Longrightarrow$ $g \in \langle x_{j_1}, \dots, x_{j_k} \rangle$. Thus, $S \subset \{1+g:g \in \langle x_{j_1}, \dots, x_{j_k} \rangle\}$. Then, $S = \{1+g:g \in \langle x_{j_1}, \dots, x_{j_k} \rangle\}$ if and only if $S \supset \{1+g:g \in \langle x_{j_1}, \dots, x_{j_k} \rangle\}$, if and only if for any $g \in \langle x_{j_1}, \dots, x_{j_k} \rangle \setminus \{0\}$ we have $\textsc{lt}(g)<1$, if and only if for any term $t \in \langle x_{j_1}, \dots, x_{j_k} \rangle$ we have $t<1$, if and only if $x_j | t$ for some $j \in \{x_{j_1},\ldots,x_{j_k}\}$ implies $t<1$, if and only if $>$ is a control order.
\end{proof}

The name of control orders is because of the following fact.

\begin{theorem}\label{decomposition}
Let $>$ be a semigroup order in $A$ with $x_{j_1}<1,\ldots, x_{j_k} <1$ and $x_{j_{k+1}}>1,\ldots,x_{j_n} >1$ where $(j_1,\ldots,j_n)$ is a permutation of $(1,\ldots,n)$. Let $S=\{1+g:g=0 \vee \textsc{lt}(g)<1,g \in A\}$. Let $I \subset A$ be a polynomial ideal and $I=\cap^k_{i=1}Q_i$ be its minimal primary decomposition. Suppose the $Q_i$ are numbered so that $Q_i + \langle x_{j_1}, \dots, x_{j_k} \rangle \neq A$ for and only for the first $m$ $Q_i$.  Then, $>$ is a control order if and only if for any ideal $I \subset A$ we have that $S^{-1}I=\cap^m_{i=1}S^{-1}Q_i$ and $I^{ec}=(S^{-1}I)^c=\cap^m_{i=1}Q_i$ are minimal primary decompositions.
\end{theorem}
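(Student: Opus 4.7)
The plan is to reduce the claimed biconditional, via Theorem~\ref{Atiyah}, to a comparison of two index sets. Write $J := \langle x_{j_1},\ldots,x_{j_k}\rangle$ and $P_i := \sqrt{Q_i}$. For every $I$, Atiyah's theorem yields the minimal primary decompositions $S^{-1}I = \cap_{i:\, S\cap P_i = \emptyset} S^{-1}Q_i$ and $I^{ec} = \cap_{i:\, S\cap P_i = \emptyset} Q_i$, so the theorem amounts to: $>$ is a control order iff, for every $I$ and every $i$, $S \cap P_i = \emptyset$ iff $Q_i + J \neq A$. A preparatory observation lets me replace $Q_i$ by $P_i$ on the right: for any primary $Q$ with radical $P$, $Q + J = A$ iff $P + J = A$. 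The implication $\Rightarrow$ is trivial from $Q \subset P$; for $\Leftarrow$, from $p + g = 1$ with $p \in P$, $g \in J$, pick $N$ with $p^N \in Q$ and observe $(1-g)^N = 1 + h$ with $h \in J$ (every term beyond $1$ in the binomial expansion carries a factor of $g$), whence $1 = p^N - h \in Q + J$.

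For the forward direction of the theorem, assume $>$ is a control order, so by Lemma~\ref{Lemma:S} we have $S = \{1+g : g \in J\}$. If $S \cap P \neq \emptyset$ then some $1+g \in P$ has $g \in J$, giving $1 = (1+g) - g \in P+J$. Conversely, if $P + J = A$ with $p + g = 1$ where $p \in P$, $g \in J$, then $p = 1 + (-g) \in S \cap P$ since $-g \in J$. Combined with the preparatory observation, this gives the forward direction.

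For the converse I argue the contrapositive. If $>$ is not a control order, then the proof of Lemma~\ref{Lemma:S} makes the inclusion $S \subsetneq \{1+g : g \in J\}$ strict, so there exists a monomial $t$ divisible by some local variable with $t > 1$ (strict because $t$ is non-constant). Take $I := \langle 1 - t \rangle$. Every primary component $Q_i$ of $I$ contains $1 - t$, and $t \in J$, so $1 = (1-t) + t \in Q_i + J$ for every $i$; hence $m = 0$ and the theorem's formula would predict $I^{ec} = A$. It suffices to show $I \cap S = \emptyset$, which (by the standard equivalence $I^{ec} = A \Leftrightarrow I \cap S \neq \emptyset$) contradicts the prediction. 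Suppose toward a contradiction that $s = (1-t)h \in S$ with $h \neq 0$, and let $\mu$ denote the leading monomial of $h$. Compatibility of $>$ with multiplication and $t > 1$ give $t\mu > \mu$, so $t\mu$ strictly exceeds every monomial occurring in $h$. Therefore in $s = h - th$ the term $-t \cdot \textsc{lt}(h)$, whose monomial is $t\mu$, cannot be cancelled, so it is the leading term of $s$. But $t\mu \neq 1$ since $t$ is non-constant, contradicting that $s \in S$ forces the leading monomial of $s$ to be $1$.

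I expect the main obstacle to be the converse direction: producing an explicit ideal whose primary decomposition detects the failure of the control-order property, and then analyzing the leading term of $h - th$ carefully enough, using only the multiplicative compatibility of $>$, to rule out every possible cancellation.
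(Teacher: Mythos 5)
Your proof is correct, and the forward direction is essentially the same as the paper's: reduce to Theorem~\ref{Atiyah} by showing, via Lemma~\ref{Lemma:S}, that $S\cap P_i=\emptyset$ iff $Q_i+\langle x_{j_1},\ldots,x_{j_k}\rangle\neq A$; your preparatory observation ($Q+J=A$ iff $P+J=A$, via the binomial expansion $(1-g)^N=1+h$) plays the role the paper assigns to the remark that $Q_i\cap S=\emptyset$ iff $\sqrt{Q_i}\cap S=\emptyset$.

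The converse is where you diverge, and your route is mildly cleaner. Both you and the paper test the same ideal $I=\langle 1-t\rangle$ for a witness $t>1$ divisible by a local variable, and both note that $I+\langle x_{j_1},\ldots,x_{j_k}\rangle=A$ forces $m=0$, so the formula predicts $I^{ec}=A$. The paper then invokes Proposition~\ref{Eisenbud} to obtain a primary factorization of $1-t$, locates a primary factor $p^\alpha$ with $\textsc{lt}(p^\alpha)>1$, and uses Theorem~\ref{Atiyah} again to conclude that $\langle p^\alpha\rangle$ survives in $I^{ec}$. You instead show directly that $I\cap S=\emptyset$ by analyzing the leading term of $(1-t)h$: multiplicativity of the order forces $\textsc{lm}((1-t)h)=t\cdot\textsc{lm}(h)\neq 1$, whereas membership in $S$ forces leading monomial $1$. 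Since $I^{ec}=A$ is equivalent to $I\cap S\neq\emptyset$, this gives the contradiction without ever decomposing $1-t$, so you avoid Proposition~\ref{Eisenbud} entirely. The trade is: your argument is shorter and more elementary, while the paper's version extracts slightly more information (which specific component survives), information it happens not to need here.
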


\begin{pf}
``$\Rightarrow$" Suppose $>$ is a control order. Then $S=\{1+g \in A:g=0 \ \textup{or} \ \textsc{lt}(g)<1\}=\{1+g:g \in \langle x_{j_1}, \dots, x_{j_k} \rangle\}$ by Lemma \ref{Lemma:S}. For any ideal $I \subset A$, if $Q_i + \langle x_{j_1}, \ldots, x_{j_k} \rangle = A$ then we have $(1-r) \in Q_i$ where $r \in \langle x_{j_1}, \dots, x_{j_k} \rangle$. Hence, $Q_i \cap S \neq \emptyset$ since $(1-r)$ also belongs to $S$. Conversely, if $Q_i \cap S \neq \emptyset$, then we can take $g \in \langle x_{j_1},\ldots,x_{j_k} \rangle$ such that $1+g \in Q_i \cap S$. Thus, $Q_i + \langle x_{j_1}, \ldots, x_{j_k} \rangle = A$. Therefore, $Q_i + \langle x_{j_1}, \dots, x_{j_k} \rangle \neq A$ if and only if $Q_i \cap S = \emptyset $ if and only if $\sqrt{Q_i} \cap S = \emptyset$. By Theorem \ref{Atiyah}, we have that $S^{-1}I=\cap^m_{i=1}S^{-1}Q_i$ and $I^{ec}=(S^{-1}I)^c=\cap^m_{i=1}Q_i$ are minimal primary decompositions.

``$\Leftarrow$" Suppose $>$ is a semigroup order in $A$ with $x_{j_1}<1,\ldots, x_{j_k} <1$ and $x_{j_{k+1}}>1,\ldots,x_{j_n} >1$ where $(j_1,\ldots,j_n)$ is a permutation of $(1,\ldots,n)$ such that for every ideal $I \subset A$ we have $S^{-1}I=\cap^m_{i=1}S^{-1}Q_i$ and $I^{ec}=(S^{-1}I)^c=\cap^m_{i=1}Q_i$ are minimal primary decompositions. If $>$ is not a control order, then there exists a term $t:=t_1t_2$ with $t_1 \in \textup{T}^{\{x_{j_1},\ldots,x_{j_k} \} } \setminus \{1\}$ and $t_2 \in \textup{T}^{\{x_{j_{k+1}},\ldots,x_{j_n} \} } \setminus \{1\}$ such that $t_1t_2 >1$. Take $I=\langle t-1 \rangle$. Then $I + \langle x_{j_1}, \dots, x_{j_k} \rangle = A$, and consequently for any primary component $Q$ of $I$ we have that $Q+\langle x_{j_1}, \dots, x_{j_k} \rangle = A$ since $Q \supset I$. Hence,  $I^{ec}= A$ by the assumption. On the other hand, by Proposition \ref{Eisenbud}, we know that a set of factors of $t-1$ generate the primary ideals in one of its minimal primary decompositions. Among them, $t-1$ has a factor $p^\alpha$ with $\textsc{lt}(p^\alpha)>1$. Thus, $\langle p^{\alpha} \rangle \cap S = \emptyset$ since no elements in $\langle p^{\alpha} \rangle$ have leading term $1$, and consequently, $\langle p^{\alpha} \rangle ^{ec}=\langle p^{\alpha} \rangle$ is in a minimal decomposition of $I$ by \mbox{Theorem \ref{Atiyah}}, a contradiction. Therefore, $>$ is a control order.
\end{pf}

\begin{remark}
The geometry of Theorem \ref{decomposition} can be seen in $\bar{K}^n$. By the Nullstellensatz, we have that $Q_i + \langle x_{j_1}, \dots, x_{j_k} \rangle \neq A$ is equivalent to $\textbf{V}(Q_i) \cap \textbf{V}(\langle x_{j_1}, \dots, x_{j_k} \rangle) \neq \emptyset$. Thus, Theorem \ref{decomposition} means that only the components $\textbf{V}(Q_i)$ that intersect the linear variety $\textbf{V}(\langle x_{j_1}, \dots, x_{j_k} \rangle)$ are kept, and the others are discarded. This process is controlled by a control order. We can also say the process is controlled by $\langle x_{j_1}, \dots, x_{j_k} \rangle$. In the next section we will see that the latter is better and can be generalized.
\end{remark}

\begin{remark}
Though Theorem \ref{decomposition} seems to be more general than Theorem \ref{DecForZero} in the sense that it can deal with ideals with any dimension, in fact, if we restrict to zero-dimensional cases, the result is weaker, since it only uses control orders but Theorem \ref{DecForZero} considers all the semigroup orders.
\end{remark}

As what we did for semigroup orders in \cite{Liang14}, control orders can also be classified only according to the comparisons between variables and $1$. We give this property a name.

\begin{definition}[Characteristics]
Given $>$ a semigroup order on $\textup{T}^{\{x_1,\ldots, x_n\}}$ and $v$ an $n$-tuple with each entry $1$ or $-1$, we say $>$ has characteristic $v$ if for each $i=1,\ldots,n$ we have $x_i>1$ if and only if $v(i)=1$ (or $x_i<1$ if and only if $v(i)=-1$). For each $i$, we say $x_i$ has global (or local) characteristic under $>$ if $x_i>1$ (or $x_i<1$).
\end{definition}

Note that the characteristic of a variable under a semigroup order has only two values, i.e. global and local. With this definition, it is easier to express the following analogue of Corollary 3.3 in \cite{Liang14}.

\begin{corollary}\label{equivalence}
For two control orders $>_1$ and $>_2$ in $A$, they have the same effect on the localization of any ideal $I$, i.e., $(S_1^{-1}I)^c=(S_2^{-1}I)^c$, if and only if every variable has the same characteristic under $>_1$ and $>_2$. In this case, we say that $>_1$ and $>_2$ are equivalent.
\end{corollary}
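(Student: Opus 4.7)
My strategy is to reduce both directions to a comparison of the local variable sets of $>_1$ and $>_2$, using Lemma \ref{Lemma:S} together with Theorem \ref{decomposition}. The sufficiency then becomes essentially a tautology, and the necessity reduces to exhibiting one small test ideal whose primary components are selected differently by the two orders.

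\textbf{Sufficiency.} Suppose $>_1$ and $>_2$ have the same characteristic. Then they share the same set of local variables, say $\{x_{j_1},\ldots,x_{j_k}\}$. By Lemma \ref{Lemma:S}, both $S_1$ and $S_2$ equal $\{1+g:g\in\langle x_{j_1},\ldots,x_{j_k}\rangle\}$, so the two multiplicative sets literally coincide. Hence $S_1^{-1}I=S_2^{-1}I$ for every ideal $I\subset A$, and therefore $(S_1^{-1}I)^c=(S_2^{-1}I)^c$.

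\textbf{Necessity.} I would argue by contrapositive. Assume the characteristics disagree; then some variable $x_i$ is local under one order and global under the other, and after relabelling we may assume $x_i$ is local under $>_1$ and global under $>_2$. Write $J_\ell$ for the ideal generated by the local variables of $>_\ell$, so $x_i\in J_1$ and $x_i\notin J_2$. As a test ideal I would take $I:=\langle x_i(x_i-1)\rangle$, whose minimal primary decomposition is $\langle x_i\rangle\cap\langle x_i-1\rangle$ by Proposition \ref{Eisenbud}. Now apply Theorem \ref{decomposition} twice. Under $>_1$, the component $\langle x_i-1\rangle$ is discarded because $1=x_i-(x_i-1)\in J_1+\langle x_i-1\rangle$, while $\langle x_i\rangle+J_1=J_1\neq A$ survives, giving $(S_1^{-1}I)^c=\langle x_i\rangle$. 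Under $>_2$ neither $\langle x_i\rangle+J_2$ nor $\langle x_i-1\rangle+J_2$ is the unit ideal, since the evaluation sending $x_i$ and all local variables of $>_2$ to $0$ (respectively $x_i\mapsto 1$ and all local variables to $0$) kills all generators; so both components are kept and $(S_2^{-1}I)^c=I\neq\langle x_i\rangle$. This contradicts the assumption that the two orders have the same effect on every ideal.

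\textbf{Main obstacle.} The only non-mechanical step is picking a test ideal whose components are sorted oppositely by $>_1$ and $>_2$; the remaining verifications are routine applications of Theorem \ref{decomposition} and elementary checks of properness by evaluation. The choice $\langle x_i(x_i-1)\rangle$ works precisely because its two primes $\langle x_i\rangle$ and $\langle x_i-1\rangle$ behave oppositely with respect to any ideal of variables containing $x_i$, so changing the characteristic of $x_i$ necessarily moves exactly one of them across the $Q+J\neq A$ dividing line in Theorem \ref{decomposition}.
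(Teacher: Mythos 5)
Your proof is correct, and it follows essentially the same strategy as the paper: the sufficiency direction is exactly the same application of Lemma \ref{Lemma:S}, and the necessity direction likewise proceeds by applying Theorem \ref{decomposition} to a small explicit test ideal. The only difference is that the paper uses the simpler test ideal $\langle x_i - 1\rangle$ (a prime, hence a single primary component), for which $(S^{-1}I)^c$ is either $I$ or $A$ depending on whether $x_i$ is global or local; your choice $\langle x_i(x_i-1)\rangle$ works too but requires you to track two components and verify properness of two sums, so the paper's version is marginally cleaner.
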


\begin{pf}
``$\Leftarrow$" By Lemma \ref{Lemma:S}, we know that $S_1=S_2=\{1+g:g \in \langle x_{j_1}, \dots, x_{j_k} \rangle\}$ where $x_{j_i}$ ($i=1,\ldots,k$) are all variables with local characteristic.

``$\Rightarrow$" If there exists a variable $x_j$ with different characteristics under $>_1$ and $>_2$, say $x_j>_1 1$ and $x_j <_2 1$, then consider the ideal $I:=\langle x_j -1 \rangle$ in $A$. We can see that $(S_1^{-1}I)^c= I$ but $(S_2^{-1}I)^c=A$ by Theorem \ref{decomposition}, a contradiction.
\end{pf}

Corollary \ref{equivalence} says that control orders in the same equivalence class have the same effect in localizing rings. Thus, when using them, we only need to choose a representative of the control orders in an equivalence class. Especially, it is easy to construct such a representative by using the characteristics of variables under this control order. We can collect all the local variables and construct an arbitrary local order $>_1$ on the term set they generate, and construct a global order $>_2$ on the term set generated by the other variables. Then, the block order $[>_1,>_2]$ is a control order that we want.

\begin{theorem} \label{Th:Matrix}
Let $>$ be a mixed semigroup order on $\textup{T}^{\{x_1,\ldots,x_n\}}$ and $M$ be a $k \times n$ real matrix such that $>$ is equal to $>_M$. Every column of $M$ has a first nonzero entry from top to bottom. We call the row number of this entry the level of this column. When this entry is positive (or negative), we say that this column is global (or local). Then $>$ is a control order if and only if every local column has smaller level than every global column in $M$.
\end{theorem}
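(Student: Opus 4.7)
The plan is to use the standard matrix representation of $>_M$: for any $\alpha \in \mathbb{Z}^n_{\geq 0}$, one has $x^\alpha > 1$ exactly when the first nonzero entry of the column vector $M\alpha^\textsc{t}$, read from top to bottom, is positive. Under this correspondence the \emph{level} and global/local label of column $j$ coincide with the row index and sign of the first nonzero entry of $Me_j$. Both directions of the theorem then reduce to inspecting the first nonzero row of $Mt^\textsc{t}$ for a carefully chosen term $t$.

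For the forward implication I argue by contrapositive. Suppose there is a local column $j_0$ of level $\ell_0$ and a global column $j_1$ of level $\ell_1$ with $\ell_0 \geq \ell_1$; I will exhibit a term $t = x_{j_0}^a x_{j_1}^b$ with $a \geq 1$ and $t > 1$. Since $x_{j_0} \mid t$ and $x_{j_0}$ is local, this contradicts the defining property of a control order. If $\ell_0 > \ell_1$, take $a = b = 1$: rows $1,\ldots,\ell_1$ of $Me_{j_0}$ vanish, so $Mt^\textsc{t}$ is zero in rows $1,\ldots,\ell_1-1$ and equals the positive entry $M_{\ell_1, j_1}$ in row $\ell_1$. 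If $\ell_0 = \ell_1$, then row $\ell_0$ of $Mt^\textsc{t}$ equals $aM_{\ell_0, j_0} + bM_{\ell_0, j_1}$, a sum of a negative and a positive term; choosing $a = 1$ and $b$ sufficiently large makes this entry positive while keeping all earlier rows zero.

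For the backward implication, let $L$ be the maximum level attained by any local column; by hypothesis every global column then has level strictly greater than $L$. Given $t = \prod_i x_i^{a_i}$ with some local $x_j$ satisfying $a_j \geq 1$, set $r := \min\{\textup{level}(i) : a_i > 0\}$, so $r \leq \textup{level}(j) \leq L$. Any column $i$ with $\textup{level}(i) = r$ must be local (its level is at most $L$, hence strictly below every global level), so $M_{r,i} < 0$. Rows $1,\ldots,r-1$ of $Mt^\textsc{t}$ vanish because each $i$ with $a_i > 0$ has $\textup{level}(i) \geq r$, while row $r$ equals the strictly negative sum $\sum_{\textup{level}(i) = r} a_i M_{r,i}$. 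Hence $t < 1$, and $>$ is a control order.

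The only delicate point is guaranteeing that the first nonzero row of $Mt^\textsc{t}$ really carries the sign I claim, and this is exactly where the level separation is used: it prevents any global column's first nonzero entry from appearing at or above row $L$, so in the backward direction the topmost contribution from a term involving a local variable is unambiguously local, while in the forward direction the only obstruction (the equal-level case) is easily defeated by inflating the exponent $b$ of the global variable.
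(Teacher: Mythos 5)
Your proof is correct and takes essentially the same route as the paper: in the backward direction, look at the minimal level among columns of variables dividing $t$ (which, by the hypothesis, must be a local level), and observe that this row of $Mt^{\textsc{t}}$ is negative while all rows above it vanish; in the forward direction, argue by contrapositive using a term $x_{j_0}^a x_{j_1}^b$ whose sign is forced by the offending global column. The only cosmetic difference is that the paper picks the single exponent $b = \lceil b_1/b_2 \rceil + 1$ to cover both the $\ell_0 > \ell_1$ and $\ell_0 = \ell_1$ cases at once, where you split into two cases, and you spell out the vanishing of the rows above the pivot level, which the paper leaves implicit.
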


\begin{proof}
``$\Leftarrow$" For a term $t \in \textup{T}^{\{x_1,\ldots,x_n\}}$ corresponding to a tuple $u \in \mathbb{Z}_{\geq 0} ^n$, if $x_j |t$ for some local variable $x_j$, then the set $X_{local}:=\{x_i : x_i |t, \textup{$x_i$ is a local variable} \}$ is not empty. Take a maximal subset $X^* \subset X_{local}$ such that among the variables in $X_{local}$, each variable in $X^*$ corresponds to a local column with the minimal level. Suppose the row of $M$ corresponding to this level is $w$, then $wu^\textsc{t}<0$ which means $t<1$. Therefore, $>$ is a control order.

``$\Rightarrow$" Suppose $>$ is a control order and there exists a local column with no smaller level than a global column of $M$, and they have the respective first nonzero entries $-b_1$ and $b_2$ where $b_1$ and $b_2$ are positive real numbers. Let $x$ and $y$ be the two variables corresponding to the local and global columns, respectively. Then, we find that $xy^{\lceil b_1/b_2 \rceil+1} >1$, a contradiction.
\end{proof}

\begin{corollary} \label{Corollary:COrder}
Let $>$ be a mixed semigroup order on $\textup{T}^{\{x_1,\ldots,x_n\}}$ with local variables $x_{j_1},\ldots,x_{j_k}$, and $>_1$ and $>_2$ are local and global orders as restrictions of $>$ on $\textup{T}^{\{x_{j_1},\ldots,x_{j_k}\}}$ and $\textup{T}^{\{x_{1},\ldots,x_n\} \setminus \{x_{j_1},\ldots,x_{j_k}\}}$, respectively. Then, $>$ is a control order implies it is the block order $[>_1,>_2]$ if and only if $k=1$.
\end{corollary}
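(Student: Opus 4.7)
The plan is to split the biconditional into two directions. For sufficiency (assuming $k=1$), let $x := x_{j_1}$ denote the unique local variable and fix an arbitrary control order $>$; I aim to show $>$ coincides with $[>_1,>_2]$ on every pair of terms. By the cancellation property of a semigroup order I may reduce to the case where the two terms $t_1,t_2$ are coprime, so their $x$-exponents satisfy $\min(a_1,a_2)=0$. A short case split then closes the argument: if $a_1=a_2=0$, both terms are products of global variables and $>$ agrees with $>_2$ by the very definition of $>_2$ as the restriction of $>$; if exactly one exponent, say $a_1$, is positive, the control-order hypothesis forces $t_1<1$, while $t_2$ (a product of global variables, each $>1$) satisfies $t_2 \geq 1$, giving $t_1<t_2$, and the block order also yields $t_1<_{\text{block}}t_2$ because its local-block verdict is $x^{a_1}<_1 1=x^{a_2}$.

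For necessity (assuming $k \geq 2$, which is the only remaining possibility since a mixed order has $k \geq 1$), I will exhibit a control order with $k$ local variables that is not the block order, generalizing Example~\ref{Example:COrder}. Label the local variables $x_{j_1},\ldots,x_{j_k}$ and pick a global variable $x_{j_{k+1}}$. Consider the matrix $M$ whose first row is $-1$ on every local coordinate and $0$ on every global coordinate, whose second row has a single $1$ on the $x_{j_{k+1}}$-coordinate, whose third row has a single $1$ on the $x_{j_2}$-coordinate, and whose remaining rows are unit vectors chosen so that $\ker M$ is trivial. By Theorem~\ref{Th:Matrix}, the first row alone certifies that $>_M$ is a control order with exactly the prescribed local/global split. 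Comparing $t_1 := x_{j_1}^2 x_{j_2} x_{j_{k+1}}^2$ and $t_2 := x_{j_1} x_{j_2}^2 x_{j_{k+1}}$ row by row, both first entries equal $-3$, but the second-row entries are $2$ and $1$, so $t_1 >_M t_2$. Stripping the global factor $x_{j_{k+1}}$ and recomputing, the first three rows give $(-3,0,1)$ versus $(-3,0,2)$, so $x_{j_1}^2 x_{j_2} <_M x_{j_1} x_{j_2}^2$. A block order with local block first would force these two comparisons to agree, so $>_M \neq [>_1,>_2]$.

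The main obstacle is the necessity direction, since the counterexample matrix must simultaneously certify the control-order property via Theorem~\ref{Th:Matrix}, define a total order on $\mathbb{Z}^n_{\geq 0}$, and exhibit monomials witnessing a reversal between the local-only and the full comparison. The three goals are handled in the construction by, respectively, the first row, the trivial-kernel completion in the lower rows, and the placement of the global-variable row above the local-tie-breaking row; each verification is a short weight-vector computation modeled on Example~\ref{Example:COrder}.
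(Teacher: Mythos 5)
Your proof is correct, and it parallels the paper's overall, but the sufficiency direction is handled by a genuinely different and more elementary argument. For necessity, both you and the paper build a counterexample from the $3\times 3$ matrix of Example~\ref{Example:COrder}: you write down an $n\times n$ matrix directly (the control row first, then a row selecting the global variable $x_{j_{k+1}}$, then a row selecting $x_{j_2}$, padded with unit rows to ensure trivial kernel), while the paper instead inserts the order of Example~\ref{Example:COrder} as the middle block of a three-block order $[>_{\textup{local}},>_{\textup{mixed}},>_{\textup{global}}]$; the witnessing reversal on $x_{j_1}^2 x_{j_2} x_{j_{k+1}}^2$ versus $x_{j_1} x_{j_2}^2 x_{j_{k+1}}$ is the same in both treatments, and your matrix satisfies Theorem~\ref{Th:Matrix}'s level criterion exactly as claimed. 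For sufficiency, the paper applies Theorem~\ref{Th:Matrix} to obtain a representing matrix, observes that with $k=1$ the first nonzero row has a single negative entry in the local column, and row-reduces the rows below it to exhibit the block shape. You instead argue directly on monomials: cancellation lets you restrict to coprime pairs, and a two-case split on whether the unique local variable divides a term closes the argument using only the definition of a control order and the fact that a product of global variables is $\geq 1$. Your version avoids the matrix machinery and is self-contained; the paper's version leverages the characterization already in hand from Theorem~\ref{Th:Matrix}. Both are valid.
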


\begin{proof}
``$\Leftarrow$" Suppose $>$ is a control order and $k=1$. By Theorem \ref{Th:Matrix}, there is a real matrix $M$ such that $>$ equals to $>_M$ and the first nonzero row $w$ of $M$ contains only one nonzero entry $-b$ where $b>0$. Since the upper rows of $M$ can be used to reduce the lower rows and this process does not change the order $>_M$, we can use $w$ to reduce all the rows below it in $M$ so that the column that this entry locates in has only one nonzero entry. Thus, we obtain a new matrix $M^*$ that represents the block order $[>_1,>_2]$.

``$\Rightarrow$" If $k \geq 2$, then we can use Example \ref{Example:COrder} to construct a control order that is not the block order described above. Let $x_1,\ldots,x_k$ be local variables and $x_{k+1},\ldots,x_n$ be global variables. Take a local order $>_{lcoal}$ on $\textup{T}^{\{x_1,\ldots,x_{k-2}\}}$ and a global order $>_{global}$ on $\textup{T}^{\{x_{k+2},\ldots,x_n\}}$ (they can be empty). Denote the control order described in Example \ref{Example:COrder} by $>_{mixed}$ and assume that it is on $\textup{T}^{\{x_{k-1},x_k,x_{k+1}\}}$. Then, the block order $[>_{lcoal},>_{mixed},>_{global}]$ is a control order by Theorem \ref{Th:Matrix}, but it is not a block order we want in this corollary for the same reason as in Example \ref{Example:COrder}.
\end{proof}

\section{Extractions of Ideals} \label{Sec:ControlVarieties}

In the last section, we see that the primary decomposition is controlled by a control order or a special ideal. In this section, we study how to control a primary decomposition by using arbitrary ideals.

For two polynomial ideals $I=\langle f_1,\ldots,f_{r} \rangle $ and $J=\langle g_1,\ldots,g_{s} \rangle$ in $A$, we want to keep the components of $\mathbf{V}(I) \subset \bar{K}^n$ that intersect $\mathbf{V}(J) \subset \bar{K}^n$ and delete the ones that do not intersect $\mathbf{V}(J)$. In other words, we use the ideal $J$ or more precisely the variety $\mathbf{V}(J)$ to control the process. In this case, we call $J$ and $\mathbf{V}(J)$ the control ideal and control variety of $I$, respectively. As what we did in \cite{Liang14}, we need to rename $g_1,\ldots,g_{s}$ as new variables $t_1,\ldots,t_{s}$ by introducing new relations $t_1-g_1,\ldots,t_{s}-g_{s}$ into $I$ to obtain a larger ideal $I':=\langle I, t_1-g_1,\ldots,t_{s}-g_{s} \rangle $ in a larger ring $A':=K[x_1,\ldots,x_n,t_1,\ldots,t_{s}]$.

Now, we study the relation of $I$ and $I'$ below.

\begin{theorem}\label{Th:lift}
If $I'=\cap_{i=1}^k{\tilde{Q}_i}$ is a minimal primary decomposition of $I'$ in $A'$, then $I=\cap_{i=1}^k{\tilde{Q}_i}|_{t_1=g_1,\ldots,t_{s}=g_{s}}$ holds and is a minimal primary decomposition of $I$ in $A$.
\end{theorem}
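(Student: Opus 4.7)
The plan is to transport the given decomposition of $I'$ through the evaluation map $\phi: A' \to A$ defined by $\phi(x_i) = x_i$ and $\phi(t_j) = g_j$. This $\phi$ is a surjective ring homomorphism with kernel $K := \langle t_1 - g_1, \ldots, t_s - g_s \rangle$, inducing an isomorphism $A'/K \cong A$. A direct check shows $I' = \phi^{-1}(I)$: the inclusion $I' \subseteq \phi^{-1}(I)$ is immediate since $\phi$ fixes elements of $A$ and kills the new relations, while for the reverse, any $f \in \phi^{-1}(I)$ decomposes as $f = \phi(f) + (f - \phi(f))$ with $\phi(f) \in I \subseteq I'$ and $f - \phi(f) \in K \subseteq I'$.

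Next, since each $\tilde{Q}_i \supseteq I' \supseteq K$, the correspondence theorem for quotient rings applies: under the bijection between ideals of $A'$ containing $K$ and ideals of $A$, each $\tilde{Q}_i$ corresponds to $Q_i := \phi(\tilde{Q}_i) = \tilde{Q}_i|_{t_1=g_1,\ldots,t_s=g_s}$, and $\tilde{Q}_i = \phi^{-1}(Q_i)$. This bijection preserves primary (and prime) ideals, so each $Q_i$ is primary in $A$, with radical $\sqrt{Q_i} = \phi(\sqrt{\tilde{Q}_i})$. It also preserves finite intersections, because any preimage in $A'$ of an element of $\cap_i Q_i$ must lie in every $\tilde{Q}_i$; hence applying $\phi$ to $I' = \cap_{i=1}^k \tilde{Q}_i$ yields $I = \phi(I') = \cap_{i=1}^k Q_i$, giving the desired decomposition.

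It remains to verify minimality. The primes $\sqrt{Q_i}$ are pairwise distinct because their preimages $\sqrt{\tilde{Q}_i}$ are, and the bijection between radical ideals containing $K$ and radical ideals of $A$ is one-to-one. Irredundancy transfers in the same way: if some $Q_i \supseteq \cap_{j \neq i} Q_j$ held, then pulling back via $\phi^{-1}$ would give $\tilde{Q}_i \supseteq \cap_{j \neq i} \tilde{Q}_j$, contradicting the minimality of the given decomposition of $I'$. Therefore $I = \cap_{i=1}^k Q_i$ is a minimal primary decomposition in $A$.

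The only delicate point is the preservation of intersections and radicals under $\phi$, but both are standard once one restricts attention to ideals containing the kernel (equivalently, works inside $A'/K \cong A$); the rest is a routine application of the quotient correspondence, so no serious obstacle arises.
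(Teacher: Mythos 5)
Your proof is correct, and it reaches the same conclusion as the paper's by invoking the correspondence theorem for the quotient $A'/K \cong A$ where $K = \langle t_1-g_1,\ldots,t_s-g_s\rangle$, rather than verifying each property by hand. The paper's argument is more pedestrian: it directly proves $\tilde{Q}_i \cap A = \tilde{Q}_i|_{t_1=g_1,\ldots,t_s=g_s}$, then checks that the intersection identity holds, that each contraction is primary, and that the decomposition is minimal, each step done explicitly with division by the relations $t_w - g_w$. You instead establish once that $\phi$ is surjective with kernel $K$ and that $I' = \phi^{-1}(I)$, then let the order-preserving bijection between ideals of $A'$ containing $K$ and ideals of $A$ do all the remaining work (preservation of primary ideals, radicals, finite intersections, and inclusions). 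This buys modularity and brevity; the paper's version buys self-containedness, since it never needs to cite the preservation properties of the quotient correspondence as black boxes. The one point worth spelling out a bit more in your version is the preservation of finite intersections: the reason it works is that a preimage of $a \in \cap_i Q_i$ can be taken to be $a$ itself (viewed in $A \subset A'$), and then for each $i$, $a - \tilde q_i \in K \subseteq \tilde Q_i$ for some $\tilde q_i \in \tilde Q_i$ with $\phi(\tilde q_i) = a$, so $a \in \tilde Q_i$. You gesture at this but it is the only place where containment of $K$ in every $\tilde Q_i$ is genuinely used, so it deserves to be explicit.
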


\begin{proof}
We first prove $\tilde{Q}_i \cap A = \tilde{Q}_i | _{t_1=g_1,\ldots,t_{s}=g_{s}}$. It is easy to see $\tilde{Q}_i \cap A \subset \tilde{Q}_i | _{t_1=g_1,\ldots,t_{s}=g_{s}}$ since for every $f \in \tilde{Q}_i \cap A$ we have $f=f| _{t_1=g_1,\ldots,t_{s}=g_{s}} \in \tilde{Q}_i | _{t_1=g_1,\ldots,t_{s}=g_{s}}$ ($f$ contains no $t_i$). Conversely, for an $f \in \tilde{Q}_i | _{t_1=g_1,\ldots,t_{s}=g_{s}} \subset A$ there exists a polynomial $\tilde{f} \in \tilde{Q}_i$ such that $f=\tilde{f}|_{t_1=g_1,\ldots,t_{s}=g_{s}}$. Note that $\tilde{f}|_{t_1=g_1,\ldots,t_{s}=g_{s}}$ is the remainder of $\tilde{f}$ divided by $\{t_1-g_1,\ldots,t_{s}-g_{s}\}$ subsequently, and $\{t_1-g_1,\ldots,t_{s}-g_{s}\} \subset I' \subset \tilde{Q}_i$. Thus, $f=\tilde{f}|_{t_1=g_1,\ldots,t_{s}=g_{s}} \in \tilde{Q}_i$, i.e., $f \in \tilde{Q}_i \cap A$. Thus, $\tilde{Q}_i \cap A = \tilde{Q}_i | _{t_1=g_1,\ldots,t_{s}=g_{s}}$.

Now, we prove the equality in the conclusion of the theorem. We have that $I=I' \cap A = (\cap_{i=1}^k{\tilde{Q}_i}) \cap A = \cap_{i=1}^k{(\tilde{Q}_i \cap A)} = \cap_{i=1}^k{\tilde{Q}_i}|_{t_1=g_1,\ldots,t_{s}=g_{s}}$.

Next, we prove ${\tilde{Q}_i}|_{t_1=g_1,\ldots,t_{s}=g_{s}}$ is primary. To see it is an ideal in $A$ is trivial. Suppose that $p$ and $q$ are two polynomials in $A$ and $pq \in {\tilde{Q}_i}|_{t_1=g_1,\ldots,t_{s}=g_{s}} \subset \tilde{Q}_i$. If $p$ is not in ${\tilde{Q}_i}|_{t_1=g_1,\ldots,t_{s}=g_{s}}$, then it is not in $\tilde{Q}_i$ either. Since $\tilde{Q}_i$ is primary in $A'$, there exists a positive integer $k$ such that $q^k \in \tilde{Q}_i$. Then $q^k \in \tilde{Q}_i \cap A = {\tilde{Q}_i}|_{t_1=g_1,\ldots,t_{s}=g_{s}}$, i.e., ${\tilde{Q}_i}|_{t_1=g_1,\ldots,t_{s}=g_{s}}$ is primary in $A$.

Finally, it is only needed to show the primary decomposition $I=\cap_{i=1}^k{\tilde{Q}_i}|_{t_1=g_1,\ldots,t_{s}=g_{s}}$ is minimal. i) The radicals of ${\tilde{Q}_i}|_{t_1=g_1,\ldots,t_{s}=g_{s}}$ are distinct. Otherwise, there exists $i$ and $j$ such that $i \neq j$ and the radicals $P_i$ and $P_j$ of ${\tilde{Q}_i}|_{t_1=g_1,\ldots,t_{s}=g_{s}}$ and ${\tilde{Q}_j}|_{t_1=g_1,\ldots,t_{s}=g_{s}}$ are equal. Denote the radicals of $\tilde{Q}_i$ and $\tilde{Q}_j$ by $\tilde{P}_i$ and $\tilde{P}_j$, respectively. For every $\tilde{f} \in \tilde{Q}_i$, it can be written as
\begin{equation}
\tilde{f}=\tilde{f}|_{t_1=g_1,\ldots,t_{s}=g_{s}} + \sum_{w=1}^{s} r_w(t_w-g_w)
\end{equation}
where $r_w \in A'$. Then $\tilde{f} \in \tilde{P}_j$, since $\tilde{f}|_{t_1=g_1,\ldots,t_{s}=g_{s}} \in {\tilde{Q}_i}|_{t_1=g_1,\ldots,t_{s}=g_{s}} \subset P_i = P_j \subset \tilde{P}_j$ and $\{t_1-g_1,\ldots,t_{s}-g_{s}\} \subset I' \subset \tilde{Q}_j$. Thus, we have $\tilde{Q}_i \subset \tilde{P}_j$. So, $\tilde{P}_i \subset \tilde{P}_j$. Similarly, we can obtain $\tilde{P}_i \supset \tilde{P}_j$. Thus, $\tilde{P}_i = \tilde{P}_j$, a contradiction. Therefore, all radicals of ${\tilde{Q}_i}|_{t_1=g_1,\ldots,t_{s}=g_{s}}$ are distinct. ii) If there exists an $i$ such that ${\tilde{Q}_i}|_{t_1=g_1,\ldots,t_{s}=g_{s}} \supset \cap_{j \neq i}{\tilde{Q}_j}|_{t_1=g_1,\ldots,t_{s}=g_{s}}$, then by formula (1) we know that ${\tilde{Q}_i} \supset \cap_{j \neq i}{\tilde{Q}_j}$, a contradiction. Thus, for every $i$ we have that ${\tilde{Q}_i}|_{t_1=g_1,\ldots,t_{s}=g_{s}} \not \supset \cap_{j \neq i}{\tilde{Q}_j}|_{t_1=g_1,\ldots,t_{s}=g_{s}}$. By i) and ii), we conclude that the primary decomposition $I=\cap_{i=1}^k{\tilde{Q}_i}|_{t_1=g_1,\ldots,t_{s}=g_{s}}$ is minimal.
\end{proof}

\begin{definition}[Extractions]\label{Def:CPart}
Let $I=\langle f_1,\ldots,f_r \rangle$ and $J=\langle g_1,\ldots,g_s \rangle $ be two polynomial ideals in $A$. Let $I=\cap_{i=1}^k Q_i$ be a minimal primary decomposition of $I$ in $A$ and $Q_i$ be numbered so that $Q_i + J \neq A$ for and only for the first $m$ $Q_i$. We call $\beta(I,J):=\cap_{i=1}^m Q_i$ the extraction of $I$ by $J$.
\end{definition}

\begin{remark}
By the Nullstellensatz, $Q_i +J \neq A$ is equivalent to $\mathbf{V}(Q_i) \cap \mathbf{V}(J) \neq \emptyset$ in $\bar{K}^n$. It means that in this case, $\beta(I,J)$ is just the intersection of the primary components whose varieties meet $\mathbf{V}(J)$. To emphasize $\bar{K}$, we denote $\mathbf{V}(\cdot)$ in $\bar{K}^n$ by $\mathbf{V}_{\bar{K}}(\cdot)$ in the rest of this paper.
\end{remark}

\begin{remark}
When $m=0$ in Definition \ref{Def:CPart}, $\beta(I,J)=\cap_{Q \in \emptyset} Q = A$ by knowledge of set theory (ZFC, Zermelo-Fraenkel set theory with the axiom of choice).
\end{remark}

\begin{proposition} \label{Prop:WellDefined}
The ideal $\beta(I,J)$ is well defined.
\end{proposition}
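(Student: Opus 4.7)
The plan is to exhibit $\beta(I,J)$ as the contraction of a specific localization of $I$ that depends only on $I$ and $J$---not on the chosen minimal primary decomposition---and then invoke Theorem \ref{Atiyah} both to evaluate this contraction and to match it against Definition \ref{Def:CPart}. Concretely, I would set
\[ S := 1 + J = \{1 + g : g \in J\}, \]
which is multiplicatively closed because $1 = 1 + 0 \in S$ and $(1 + g_1)(1 + g_2) = 1 + (g_1 + g_2 + g_1 g_2) \in S$.

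The core step is to verify, for each primary component $Q_i$ with radical $P_i$, the chain of equivalences
\[ Q_i + J \neq A \iff P_i + J \neq A \iff P_i \cap S = \emptyset. \]
For the first equivalence, the direction $\Leftarrow$ is immediate from $Q_i \subseteq P_i$; for $\Rightarrow$, if $P_i + J = A$ then $1 = p + g$ with $p \in P_i$ and $g \in J$, and choosing $n$ with $p^n \in Q_i$ the binomial expansion of $(p + g)^n = 1$ places $1$ into $Q_i + J$. For the second equivalence, $1 = p + g$ with $p \in P_i$, $g \in J$ is the same statement as $p = 1 + (-g) \in S \cap P_i$.

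With this translation in place, the hypothesis of Theorem \ref{Atiyah} is satisfied: the numbering prescribed in Definition \ref{Def:CPart} places exactly the first $m$ radicals outside $S$. Theorem \ref{Atiyah} then yields
\[ (S^{-1}I)^c = \bigcap_{i=1}^m Q_i = \beta(I,J), \]
and since the left-hand side is defined purely in terms of $I$ and $S$ (with $S$ determined by $J$ alone), it does not depend on the choice of minimal primary decomposition of $I$; hence neither does $\beta(I,J)$.

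The only non-routine step is the passage from the radical-level relation $P_i + J = A$ down to the level of the primary ideal $Q_i$ itself; once that binomial-expansion argument is dispatched, the remainder reduces to a mechanical invocation of Theorem \ref{Atiyah}, and no appeal to the more delicate second uniqueness theorem for embedded primary components is required.
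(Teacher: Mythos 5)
Your proof is correct, and it takes a genuinely different route from the paper's.

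The paper proves well-definedness by showing that $\{\sqrt{Q_i} : Q_i + J \neq A\}$ is an \emph{isolated set} of primes belonging to $I$ (if $P$ is in the set and an associated prime $P'$ satisfies $P' \subseteq P$, then $P'$ is in the set) and then appealing to the second uniqueness theorem for primary decomposition from \cite{Atiyah1969}, a result not otherwise stated in the paper. You instead build the explicit multiplicatively closed set $S = 1 + J$, verify the equivalences $Q_i + J \neq A \iff \sqrt{Q_i} + J \neq A \iff \sqrt{Q_i} \cap S = \emptyset$ (all three steps are sound; the binomial-expansion step correctly passes from the radical to the primary ideal), and then invoke Theorem~\ref{Atiyah}, which the paper does state, to conclude $\beta(I,J) = (S^{-1}I)^c$. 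This is manifestly independent of the chosen decomposition. Your approach buys two things: it is more self-contained relative to the results assembled in the paper, and it yields the extra identity $\beta(I,J) = \bigl((1+J)^{-1}I\bigr)^c$, which realizes every extraction as a contraction of a concrete localization and thus prefigures (in the converse direction) Theorem~\ref{Th:powerful}. The paper's route is shorter on the page, but at the cost of importing the second uniqueness theorem as a black box.
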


\begin{proof}
If $Q_i + J \neq A$ then $\sqrt{Q_i} + J \neq A$.  For a $Q_j$ with $\sqrt{Q_j} \subset \sqrt{Q_i}$, it is easy to see $\sqrt{Q_j} + J \neq A$. Since $Q_j + J \neq A$ if and only if $\sqrt{Q_j} + J \neq A$, we have $Q_j + J \neq A$. This means that for every minimal primary decomposition of $I$, all the primes belonging to $\beta(I,J)$ conform an isolated set of primes belonging to $I$. Then, by the second uniqueness theorem on page 54 in \cite{Atiyah1969}, $\beta(I,J)$ is independent of the decomposition.
\end{proof}

\begin{corollary}
For any ideal $I \subset A$ , the set $\{\beta(I,J): J \textup{ is an ideal in } A\}$ is finite.
\end{corollary}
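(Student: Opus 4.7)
The plan is to show that the map $J \mapsto \beta(I,J)$ factors through a finite set, namely the power set of the set of associated primes of $I$. Fix once and for all a minimal primary decomposition $I = \cap_{i=1}^k Q_i$ with $P_i := \sqrt{Q_i}$. By the first uniqueness theorem for primary decompositions, the set $\mathrm{Ass}(I) = \{P_1,\ldots,P_k\}$ is finite and uniquely determined by $I$, independent of the chosen decomposition.

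For each ideal $J \subset A$, define $\Sigma(J) := \{\, i \in \{1,\ldots,k\} : Q_i + J \neq A\,\}$. Since $Q_i + J \neq A$ is equivalent to $\sqrt{Q_i} + J \neq A$ (both express that $J$ is not contained in the unit ideal modulo the radical), $\Sigma(J)$ depends only on which associated primes $P_i$ of $I$ satisfy $P_i + J \neq A$, not on the choice of primary components. Moreover, the argument already used in the proof of Proposition \ref{Prop:WellDefined} shows that $\{P_i : i \in \Sigma(J)\}$ is an isolated set of primes belonging to $I$: if $i \in \Sigma(J)$ and $P_j \subset P_i$, then $P_j + J \subset P_i + J \neq A$, so $j \in \Sigma(J)$.

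Now invoke the second uniqueness theorem (\cite{Atiyah1969}, p.~54): for any isolated set of primes belonging to $I$, the intersection of the corresponding primary components is independent of the choice of minimal primary decomposition. Applying this to $\Sigma(J)$ gives that $\beta(I,J) = \cap_{i \in \Sigma(J)} Q_i$ is determined by $\Sigma(J)$ alone. Consequently, two ideals $J_1$ and $J_2$ with $\Sigma(J_1) = \Sigma(J_2)$ yield the same extraction $\beta(I,J_1) = \beta(I,J_2)$.

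Since $\Sigma(J)$ is a subset of the finite set $\{1,\ldots,k\}$, there are at most $2^k$ possible values of $\Sigma(J)$, hence at most $2^k$ distinct values of $\beta(I,J)$. This proves finiteness. No step is genuinely hard; the only subtlety is reducing the question from the non-unique primary components to the uniquely determined associated primes, which is handled by observing that the selection criterion $Q_i + J \neq A$ passes to radicals and by applying the second uniqueness theorem to the resulting isolated subsets.
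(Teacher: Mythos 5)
Your proof is correct and is in substance the same argument the paper uses. The paper's proof is a one-liner, \emph{``Note that $I$ has only finitely many isolated sets of prime ideals,''} which implicitly relies on the discussion in Proposition~\ref{Prop:WellDefined}: for every $J$, the primes belonging to $\beta(I,J)$ form an isolated set of primes of $I$, and by the second uniqueness theorem $\beta(I,J)$ is determined by that isolated set. You simply spell this out: fix one minimal primary decomposition $I=\cap_{i=1}^k Q_i$, observe that $\beta(I,J)$ is determined by the index set $\Sigma(J)\subset\{1,\dots,k\}$ (an isolated subset in the sense of associated primes), and bound the number of such subsets by $2^k$. The only mild redundancy is that, once well-definedness of $\beta$ has been established in Proposition~\ref{Prop:WellDefined}, you can immediately use the fixed decomposition and skip re-invoking the second uniqueness theorem inside this proof; but that does not affect correctness.
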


\begin{proof}
Note that $I$ has only finitely many isolated sets of prime ideals.
\end{proof}

\begin{remark}
If we define $\alpha(I,J):=\cap_{i=m+1}^k Q_i$ in Definition \ref{Def:CPart}, then we will see that it is not well defined. For example, take $I=\langle x^2 -xy \rangle$ and $J=\langle x , y-1\rangle$ in $K[x,y]$. Let $I=\langle x \rangle \cap \langle x, y \rangle^2 = \langle x \rangle \cap \langle x^2 ,y \rangle $ be two minimal primary decompositions in $A$. We can see in the two cases the values of $\beta (I,J)$ are all equal to $\langle x \rangle$. However, $\alpha_1(I,J) = \langle x, y \rangle^2 \neq \langle x^2 ,y \rangle = \alpha_2(I,J)$.
\end{remark}

As a consequence of Theorem \ref{Atiyah}, we can see that for any multiplicatively closed subset $S \subset A$ and any polynomial ideal $I$, the associated primes of $(S^{-1}I)^c$ constitute an isolated set of primes of $I$. So, the following theorem means that extractions of a polynomial ideal are as powerful as localizations.

\begin{theorem} \label{Th:powerful}
Let $I= \cap_{i=1}^k Q_i$ be a minimal primary decomposition of an ideal $I$ in $A$ and $\{ \sqrt{Q_{i_1}}, \ldots, \sqrt{Q_{i_m}} \}$ be an isolated set of prime ideals of $I$. Then there exists an ideal $J \in A$ such that $\beta(I,J)= Q_{i_1} \cap \cdots \cap Q_{i_m}$. Namely, $\{ \beta (I,J): J \textup{ is an ideal in $A$} \} = \{Q_{i_1} \cap \cdots \cap Q_{i_m}: \{ \sqrt{Q_{i_1}}, \ldots, \sqrt{Q_{i_m}} \} \textup{ is an isolated set of primes of $I$}  \}$.
\end{theorem}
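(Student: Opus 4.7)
My plan is to construct $J$ as an intersection of $m$ carefully chosen maximal ideals, one per prime in the isolated set $\Sigma := \{P_{i_1}, \ldots, P_{i_m}\}$, exploiting that $A$ is a Jacobson ring. First I would reformulate the target: by Definition \ref{Def:CPart} it suffices to produce $J \subset A$ with $Q_{i_j} + J \neq A$ for every $j=1,\ldots,m$ and $Q_{i'} + J = A$ for every $i' \notin \{i_1,\ldots,i_m\}$. The first condition is equivalent to $P_{i_j} + J \neq A$, since both are equivalent to the existence of a maximal ideal containing $P_{i_j}$ and $J$.

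The isolated-set hypothesis says that for each $i' \notin \Sigma$ and each $j$ one has $P_{i'} \not\subset P_{i_j}$, which translates to $V(P_{i_j} + P_{i'})$ being a proper closed subset of the irreducible space $V(P_{i_j})$. Hence
\[
U_j := V(P_{i_j}) \setminus \bigcup_{i' \notin \Sigma} V(P_{i_j} + P_{i'})
\]
is a non-empty Zariski-open subset of $V(P_{i_j})$. Because $A/P_{i_j}$ is a finitely generated $K$-algebra and therefore a Jacobson ring, its maximal ideals are dense in $\mathrm{Spec}(A/P_{i_j})$; in particular $U_j$ contains a closed point, supplying a maximal ideal $M_j \subset A$ with $P_{i_j} \subset M_j$ and $P_{i'} \not\subset M_j$ for every $i' \notin \Sigma$.

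I then set $J := M_1 \cap \cdots \cap M_m$ and verify the two requirements. For each $j$, the inclusions $Q_{i_j} \subset P_{i_j} \subset M_j$ and $J \subset M_j$ give $Q_{i_j} + J \subset M_j \neq A$. For each $i' \notin \Sigma$, maximality of $M_j$ combined with $P_{i'} \not\subset M_j$ gives $M_j + P_{i'} = A$, so I pick $a_j \in M_j$ with $a_j \equiv 1 \pmod{P_{i'}}$; then $a := a_1 \cdots a_m \in M_1 \cdots M_m \subset J$ satisfies $a \equiv 1 \pmod{P_{i'}}$, so $1-a \in P_{i'} = \sqrt{Q_{i'}}$ and $(1-a)^N \in Q_{i'}$ for some $N$. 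Expanding $(1-a)^N = 1 - a\,h$ with $h \in A$ yields $1 = (1-a)^N + a h \in Q_{i'} + J$. By Definition \ref{Def:CPart}, these conditions force $\beta(I,J) = Q_{i_1} \cap \cdots \cap Q_{i_m}$.

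The essential non-routine input is the Jacobson-density argument, which holds for any base field $K$ because $A/P_{i_j}$ is finitely generated over $K$. I expect the main obstacle to be organizational: cleanly phrasing why the isolated-set condition yields non-emptiness of $U_j$, and then bridging the conclusion $P_{i'} + J = A$ (which is what the maximal-ideal construction directly gives) up to the required $Q_{i'} + J = A$ via the nilpotence of $1-a$ modulo the primary ideal $Q_{i'}$.
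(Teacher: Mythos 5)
Your proof is correct, and its skeleton is the same as the paper's: for each prime in the isolated set you pick a maximal ideal of $A$ containing it that avoids all the associated primes outside the isolated set, take $J$ to be the intersection of those maximal ideals, and then check the two alternative conditions in Definition~\ref{Def:CPart}. The only substantive difference is how you produce the maximal ideals. The paper passes to $\bar{K}^n$, chooses a point $p\in\mathbf{V}_{\bar K}(P_t)$ off the other components, forms the ideal $M_t$ of polynomials in $A$ vanishing at $p$, and then spends several lines (via strongly independent sets and Proposition~\ref{Prop:StronglyIndependentSet}) proving that $M_t$ is a maximal ideal of $A$. You instead invoke the Jacobson property of the finitely generated $K$-algebra $A/P_{i_j}$ to get a closed point in the nonempty open set $U_j\subset\mathrm{Spec}(A/P_{i_j})$; this buys you the maximality for free and avoids the explicit excursion through $\bar K$-points. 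You also take one maximal ideal per prime in the isolated set rather than only per \emph{maximal} prime of the isolated set as the paper does; this is harmless (it just makes $J$ a little smaller than necessary), since $Q_{i_j}\subset P_{i_j}\subset M_j$ still gives $Q_{i_j}+J\subset M_j\neq A$. Finally, your explicit bridging step, raising $1-a\in P_{i'}$ to a power to land in $Q_{i'}$, fills in a detail the paper leaves as ``easy to check.'' All of this is sound.
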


\begin{proof}
Let $\{P_1,\ldots,P_j\} $ be maximal elements of $\{ \sqrt{Q_{i_1}}, \ldots, \sqrt{Q_{i_m}} \}$. Then $P_1, \ldots, P_j$ are prime ideals. For every $t \in \{1,\ldots,j\}$, $P_t \not \supset \cap_{i \in \{1,\ldots,k\} \setminus \{i_1,\ldots,i_m\}} \sqrt{Q_i}$ since otherwise $P_t \supset \sqrt{Q_i}$ for some $i \in \{1,\ldots,k\} \setminus \{i_1,\ldots,i_m\}$, a contradiction. Then by the Nullstellensatz $\mathbf{V}_{\bar{K}}(P_t) \not \subset \cup_{i \in \{1,\ldots,k\} \setminus \{i_1,\ldots,i_m\}} \mathbf{V}_{\bar{K}}(Q_i)$ where $\bar{K}$ is the algebraic closure of $K$. Take a point $p \in \mathbf{V}_{\bar{K}}(P_t) \setminus \cup_{i \in \{1,\ldots,k\} \setminus \{i_1,\ldots,i_m\}} \mathbf{V}_{\bar{K}}(Q_i)$. Let $M_t$ be the maximal subset of $A$ such that all its elements vanish at $p$. Then $M_t$ is a proper and nonempty ideal of $A$. We prove $M_t$ is a maximal ideal. Firstly, $M_t$ is zero-dimensional. Otherwise, by \mbox{Proposition \ref{Prop:StronglyIndependentSet}}, we can get a nonempty strongly independent set of $M_t$ w.r.t. an arbitrary global degree order on $\textup{T}^{\{x_1,\ldots,x_n\}}$. For simplicity, suppose this set is $\{x_{1},\ldots,x_{w}\}$. Denote the minimal polynomial of $x_{h}(p)$ in $K[x_h]$ by $f_h$. Then we obtain a larger set $M_t \cup \{f_h:h=1,\ldots,w\}$ whose elements vanish at $p$, a contradiction. Next, we prove that $M_t$ is a prime ideal. Otherwise, we have a minimal primary decomposition of $M_t$ in $A$. Take an associated prime that vanishes at $p$ as $P$. Then $P$ strictly contains $M_t$, a contradiction. Thus, $M_t$ is a zero-dimensional prime ideal, i.e. a maximal ideal. Then, $M_t$ is the maximal set in $A$ whose elements vanish at any fixed point in $\mathbf{V}_{\bar{K}}(M_t)$. Take $J=\cap_{t=1}^j M_t$. It is easy to check that $Q_u +J \neq A$ for every $u \in \{i_1,\ldots,i_m\}$.  For $v \in \{1,\ldots,k\} \setminus \{i_1,\ldots,i_m\}$ and every $t$, we have $Q_v+M_t=A$, otherwise $\mathbf{V}_{\bar{K}}(Q_v) \cap \mathbf{V}_{\bar{K}}(M_t) \neq \emptyset$ and thus $Q_v \subset M_t$ which implies $\mathbf{V}_{\bar{K}}(Q_v) \supset \mathbf{V}_{\bar{K}}(M_t)$, a contradiction. So, $Q_v +J =A$. Thus, $\beta(I,J)= Q_{i_1} \cap \cdots \cap Q_{i_m}$. Check the proof of Proposition \ref{Prop:WellDefined} to see why the two sets in the last sentence of this theorem are equal.
\end{proof}

The next theorem provides a relation between $\beta(I,J)$ and $I'$.

\begin{theorem} \label{Th:Projection}
Let $I=\langle f_1,\ldots,f_r \rangle$ and $J=\langle g_1,\ldots,g_s \rangle $ be two polynomial ideals in $A$. Let $>$ be a control order on $\textup{T}^{\{x_1,\ldots,x_{n+s}\}}$ with local variables $x_{n+1},\ldots,x_{n+s}$ and global variables $x_{1},\ldots,x_{n}$. Let $I'=\langle I, x_{n+1}-g_1,\ldots,x_{n+s}-g_s \rangle \subset A'$. Suppose $I'=\cap_{i=1}^k \tilde{Q}_i$ is a minimal primary decomposition of $I'$ in $A'$ and $\tilde{Q}_i$ are numbered so that $\tilde{Q}_i + \langle x_{n+1}, \ldots, x_{n+s} \rangle \neq A'$ for and only for the first $m$ $\tilde{Q}_i$. Then, we have that $\beta(I,J)=I'^{ec}|_{x_{n+1}=g_1,\ldots, x_{n+s}=g_s} = \cap_{i=1}^m \tilde{Q}_i|_{x_{n+1}=g_1,\ldots, x_{n+s}=g_s}$ is a minimal primary decomposition.
\end{theorem}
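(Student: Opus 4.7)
The plan is to combine the two preceding results, namely Theorem \ref{decomposition} applied to the lifted ideal $I'$ and Theorem \ref{Th:lift}'s substitution trick, and glue them with one bridge lemma that translates the condition ``$\tilde{Q}_i+\langle x_{n+1},\ldots,x_{n+s}\rangle\neq A'$'' in $A'$ into ``$\tilde{Q}_i|_{x_{n+1}=g_1,\ldots,x_{n+s}=g_s}+J\neq A$'' in $A$. First I would apply Theorem \ref{decomposition} to $I'$: since $>$ is a control order with local variables $x_{n+1},\ldots,x_{n+s}$, it yields $I'^{ec}=\bigcap_{i=1}^m\tilde{Q}_i$ as a minimal primary decomposition in $A'$. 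Independently, Theorem \ref{Th:lift} gives the full minimal primary decomposition $I=\bigcap_{i=1}^k\tilde{Q}_i|_{x_{n+1}=g_1,\ldots,x_{n+s}=g_s}$ in $A$, together with the useful identity $\tilde{Q}_i\cap A=\tilde{Q}_i|_{x_{n+1}=g_1,\ldots,x_{n+s}=g_s}$ (which rests on $x_{n+j}-g_j\in I'\subset\tilde{Q}_i$).

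Next I would prove the bridge equivalence. For the direction ``$=A'\Rightarrow{=}A$'', take $1=q+\sum_j h_j x_{n+j}$ with $q\in\tilde{Q}_i$, $h_j\in A'$, substitute $x_{n+j}\mapsto g_j$, and observe $q|_{\cdots}\in\tilde{Q}_i|_{\cdots}$ while $\sum_j h_j|_{\cdots}g_j\in J$. For the converse, start from $1=p+\sum_j h_j g_j$ with $p\in\tilde{Q}_i|_{\cdots}\subset\tilde{Q}_i$ and $h_j\in A$, and rewrite each $g_j$ as $x_{n+j}-(x_{n+j}-g_j)$; the correction terms lie in $I'\subset\tilde{Q}_i$, so the resulting expression places $1$ inside $\tilde{Q}_i+\langle x_{n+1},\ldots,x_{n+s}\rangle$ in $A'$. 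Under the numbering hypothesis, the bridge identifies the surviving components of $\beta(I,J)$ in the decomposition $I=\bigcap_{i=1}^k\tilde{Q}_i|_{\cdots}$ as precisely $\tilde{Q}_1|_{\cdots},\ldots,\tilde{Q}_m|_{\cdots}$, giving $\beta(I,J)=\bigcap_{i=1}^m\tilde{Q}_i|_{\cdots}$. On the other hand, since each $\tilde{Q}_i$ contains the lift relations, the substitution commutes with finite intersections over those $\tilde{Q}_i$, so $I'^{ec}|_{\cdots}=\bigl(\bigcap_{i=1}^m\tilde{Q}_i\bigr)|_{\cdots}=\bigcap_{i=1}^m\tilde{Q}_i|_{\cdots}$, establishing the three-term equality.

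It remains to check minimality of $\bigcap_{i=1}^m\tilde{Q}_i|_{\cdots}$ in $A$. Distinctness of the radicals follows from the distinctness already established for the full list in Theorem \ref{Th:lift}. For irredundancy I would mimic part ii) of that proof: if $\tilde{Q}_i|_{\cdots}\supset\bigcap_{j\neq i,\,j\leq m}\tilde{Q}_j|_{\cdots}$ for some $i\leq m$, then for any $\tilde{f}\in\bigcap_{j\neq i,\,j\leq m}\tilde{Q}_j$ the identity $\tilde{f}=\tilde{f}|_{\cdots}+\sum_w r_w(x_{n+w}-g_w)$ shows both summands lie in $\tilde{Q}_i$, hence $\bigcap_{j\neq i,\,j\leq m}\tilde{Q}_j\subset\tilde{Q}_i$, contradicting the minimality of $I'^{ec}=\bigcap_{i=1}^m\tilde{Q}_i$ obtained in step one. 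The main obstacle I expect is the bridge lemma, because it is the only place where the asymmetry between the new variables $x_{n+j}$ and their images $g_j$ must be handled directly; once it is in place, the rest is a routine assembly of Theorems \ref{decomposition} and \ref{Th:lift}.
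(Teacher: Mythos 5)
Your proposal is correct and follows essentially the same route as the paper: apply Theorem \ref{decomposition} to $I'$, apply Theorem \ref{Th:lift} to relate $\tilde{Q}_i$ and $\tilde{Q}_i|_{x_{n+1}=g_1,\ldots,x_{n+s}=g_s}$, and prove the bridge equivalence between $\tilde{Q}_i+\langle x_{n+1},\ldots,x_{n+s}\rangle=A'$ and $\tilde{Q}_i|_{\cdots}+J=A$ by exactly the same two substitution arguments. The only cosmetic difference is that the paper obtains minimality of $\bigcap_{i=1}^m\tilde{Q}_i|_{\cdots}$ by applying Theorem \ref{Th:lift} directly to $I'^{ec}$, whereas you re-derive it by repeating the irredundancy argument; both are sound.
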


\begin{proof}
By Theorem \ref{decomposition}, we know that $I'^{ec}=\cap_{i=1}^m \tilde{Q}_i$ is a minimal primary decomposition of $I'^{ec}$. Then, by Theorem \ref{Th:lift}, $I'^{ec}|_{x_{n+1}=g_1,\ldots, x_{n+s}=g_s}=\cap_{i=1}^m \tilde{Q}_i|_{x_{n+1}=g_1,\ldots, x_{n+s}=g_s}$ is a minimal primary decomposition. So, we only need to prove the equality that $\beta(I,J)=I'^{ec}|_{x_{n+1}=g_1,\ldots, x_{n+s}=g_s}$. By Theorem \ref{Th:lift}, $I=\cap_{i=1}^k \tilde{Q}_i|_{x_{n+1}=g_1,\ldots, x_{n+s}=g_s}$ is a minimal primary decomposition. For simplicity, denote $\tilde{Q}_i|_{x_{n+1}=g_1,\ldots, x_{n+s}=g_s}$ by $Q_i$.  By \mbox{Definition \ref{Def:CPart}}, we need to show $Q_i + J =A$ if and only if $\tilde{Q}_i +\langle x_{n+1}, \ldots, x_{n+s} \rangle = A'$ for every $i=1,\ldots,k$. If $Q_i + J =A$, then there exists a polynomial $f \in Q_i$ such that $1=f+\sum_{t=1}^s u_tg_t$ where $u_t \in A$. Since $\{ x_{n+1}-g_i, \ldots, x_{n+s}-g_s \} \subset \tilde{Q}_i$ and $f \in Q_i \subset \tilde{Q}_i$ (see the proof of Theorem \ref{Th:lift}), we have $1=f+\sum_{t=1}^s u_t(g_t-x_{n+t}+x_{n+t})=f+\sum_{t=1}^s u_t(g_t-x_{n+t})+\sum_{t=1}^s u_t x_{n+t} \in \tilde{Q}_i + \langle x_{n+1}, \ldots, x_{n+s} \rangle$, i.e. $\tilde{Q}_i + \langle x_{n+1}, \ldots, x_{n+s} \rangle = A'$. Conversely, if $\tilde{Q}_i + \langle x_{n+1}, \ldots, x_{n+s} \rangle = A'$, then there exists a polynomial $\tilde{f} \in \tilde{Q}_i$ such that $1=\tilde{f}+\sum_{t=1}^s v_t x_{n+t}$ where $v_t \in A'$. Substituting $x_{n+1}=g_1, \ldots, x_{n+s}=g_s$ in this expression of $1$, we obtain $1=\tilde{f}|_{x_{n+1}=g_1,\ldots, x_{n+s}=g_s}+\sum_{t=1}^s v_t|_{x_{n+1}=g_1,\ldots, x_{n+s}=g_s} g_t \in Q_i + J$, i.e. $Q_i +J =A$. Therefore, the theorem has been proved.
\end{proof}

\begin{corollary} \label{Corollary:membership}
With the conditions in Theorem \ref{Th:Projection}, we have that $\beta(I,J)=I'^e \cap A$ and $\sqrt{\beta(I,J)}=\sqrt{I'^e} \cap A$.
\end{corollary}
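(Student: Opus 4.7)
The plan is to reduce everything to the formula $\beta(I,J) = I'^{ec}|_{x_{n+1}=g_1,\ldots,x_{n+s}=g_s}$ already established in Theorem \ref{Th:Projection}. First I would pin down the interpretation of $I'^e \cap A$: since $A$ embeds into $\textup{Loc}_>(A')$ via $A \hookrightarrow A' \hookrightarrow S^{-1}A'$, the symbol $I'^e \cap A$ denotes $\{f \in A : f/1 \in I'^e\}$, and by the definition of contraction this equals $I'^{ec} \cap A$. So the first assertion of the corollary is equivalent to $\beta(I,J) = I'^{ec} \cap A$.

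The crux is the identity $I'^{ec} \cap A = I'^{ec}|_{x_{n+1}=g_1,\ldots,x_{n+s}=g_s}$, after which Theorem \ref{Th:Projection} closes the first half immediately. I would obtain this by re-running the substitution-intersection argument from the first paragraph of the proof of Theorem \ref{Th:lift}. The only property of $\tilde{Q}_i$ used there is the containment $\{x_{n+1}-g_1,\ldots,x_{n+s}-g_s\} \subset \tilde{Q}_i$; the argument therefore applies verbatim to any ideal of $A'$ containing those generators. Since $I'^{ec} \supset I' \supset \{x_{n+1}-g_1,\ldots,x_{n+s}-g_s\}$, the hypothesis is satisfied and the desired identity follows. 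This is the step most worth spelling out carefully, because Theorem \ref{Th:lift} is literally stated only for primary components, so one should emphasise that its proof depends solely on the containment of the lift generators $x_{n+i}-g_i$ and not on any primariness.

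For the radical statement, I would invoke the standard fact that contraction along a ring homomorphism commutes with radicals: for $f \in A$, we have $f \in \sqrt{I'^e \cap A}$ iff $f^n/1 \in I'^e$ for some $n \geq 1$, iff $f/1 \in \sqrt{I'^e}$, iff $f \in \sqrt{I'^e} \cap A$. Combining this with the equality already proved gives $\sqrt{\beta(I,J)} = \sqrt{I'^e \cap A} = \sqrt{I'^e} \cap A$. There is no substantial obstacle; once the substitution-intersection identity is seen to hold for any ideal containing the generators $x_{n+i}-g_i$, the corollary is essentially a bookkeeping rewrite of Theorem \ref{Th:Projection} together with the elementary interplay between extension, contraction and radical.
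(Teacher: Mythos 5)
Your proposal is correct and follows essentially the same route as the paper: both reduce the first equality to the chain $\beta(I,J)=I'^{ec}|_{x_{n+1}=g_1,\ldots,x_{n+s}=g_s}=I'^{ec}\cap A=I'^{e}\cap A$ using Theorem \ref{Th:Projection} and the substitution-versus-intersection identity from the first paragraph of the proof of Theorem \ref{Th:lift}, and both handle the radical via the elementary interplay between extension, contraction and radical. Your version is in fact slightly more careful than the paper's one-line chain, because you flag explicitly that the substitution-intersection argument from Theorem \ref{Th:lift} applies to $I'^{ec}$ only because $I'^{ec}\supset I'\supset\{x_{n+1}-g_1,\ldots,x_{n+s}-g_s\}$ — a hypothesis that is needed but left implicit in the paper.
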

\begin{proof}
By Theorem \ref{Th:Projection}, $\beta(I,J)= I'^{ec}|_{x_{n+1}=g_1,\ldots, x_{n+s}=g_s} = I'^{ec} \cap A = I'^e \cap A' \cap A = I'^e \cap A$. Now, we prove the other equality. Firstly, we have $\sqrt{\beta(I,J)} \subset \sqrt{I'^e}$ and $\sqrt{\beta(I,J)} \subset A$, and thus $\sqrt{\beta(I,J)} \subset \sqrt{I'^e} \cap A$. Secondly, for every $f \in A$, if $f \in \sqrt{I'^e}$ then there exists a positive integer $d$ such that $f^d \in I'^e \cap A = \beta(I,J)$. Hence, $\sqrt{I'^e} \cap A \subset \sqrt{\beta(I,J)}$. We are done.
\end{proof}

\begin{remark} \label{Remark:membership}
The above corollary can be used to determine the membership problems of $\beta(I,J)$ and $\sqrt{\beta(I,J)}$. This is because that for a polynomial $f \in A$, we have $f \in \beta(I,J)$ if only if $f \in I'^e$ and $f \in \sqrt{\beta(I,J)}$ if only if $f \in \sqrt{I'^e}$, and then we can determine the membership problems of $I'^e$ and $\sqrt{I'^e}$ by the standard basis method (cf. \cite{Alonso91,Graebe95,Cox05}).
\end{remark}

\section{Dimensions} \label{Sec:Dimension}
In this section, we study the Krull dimensions of an ideal in a localized ring $\textup{Loc}_>(A)$ and its contraction in $A$ as well as their relations to control orders.

To distinguish dimensions of ideals in different rings, instead of $\dim(W)$, we denote the dimension of an ideal $W$ in a ring $R$ by $\dim(R/W)$ as the Krull dimension of the quotient ring $R/W$. This is also the definition of the Krull dimension of an ideal.

The following example shows that $\dim(\textup{Loc}_>(A)/I^e)$ and $\dim(A/I^{ec})$ may not coincide for a general semigroup order $>$ and an arbitrary ideal $I$ in $A$.

\begin{example}
Consider a localized ideal $\langle xy-1 \rangle ^e \subset \textup{Loc}_>(K[x,y])$ where $>$ is a semigroup order given by a $2 \times 2$ diagonal  matrix $M$ with $M_{11}=1$ and $M_{22}=-1$. Then for every element $f$ in $\langle xy-1 \rangle$, we have $(xy)|\textsc{lt}(f)$ and thus $\langle f \rangle \cap S = \emptyset$. This implies that $\langle xy-1 \rangle ^{ec} =\langle xy-1 \rangle $ (note that $\langle xy-1 \rangle$ is a prime ideal) and has dimension $1$ in $K[x,y]$ by Proposition \ref{Prop:StronglyIndependentSet}. But $\langle xy-1 \rangle ^e$ is zero-dimensional, since it is a maximal ideal in $\textup{Loc}_>(K[x,y])$. To see this clearly, we verify that all the nonzero elements in $B:=\textup{Loc}_>(K[x,y]) / \langle xy-1 \rangle ^e=S^{-1} (K[x,y] / \langle xy-1 \rangle)$ are invertible. Note that every polynomial $g$ in $K[x,y]$ can be reduced to $g_1(x)+g_2(y)$ by $xy-1$ where $g_1, g_2$ are univariate polynomials with $g_2(0)=0$. When $g \not \in \langle xy-1 \rangle$, we have $g_1(x)+g_2(y) \neq 0$ in $A$. If $g_1$ is a zero polynomial, then $g_2$ is not zero in $A$ and can be factored as $cy^r(1+h(y))$ where $c$ is a nonzero constant, $r$ is a positive integer and $h$ is a univariate polynomial. Thus, $g_2$ is invertible ($y$ is a unit in $B$) and so does $g$ in this case. If $g_1$ is not a zero polynomial, then consider $y^{\deg(g_1)}(g_1+g_2)$. It can be reduced to a univariate polynomial in $y$ with a nonzero constant term $\textsc{lc}(g_1)$ (the restriction of $>$ on $\textup{T}^{\{x\}}$ is a global order) by $xy-1$ and hence invertible in $B$. Therefore, $B$ is a field and $\langle xy-1 \rangle ^e$ is a maximal ideal in $\textup{Loc}_>(K[x,y])$.
\end{example}

\begin{lemma} \label{Lemma:bar}
For an ideal $I \subset A$, we denote the ideal generated by $I$ in $\bar{K}[x_1,\ldots,x_n]$ by $\bar{I}$. Then, for any semigroup order $>$ on $\textup{T}^{\{x_1,\ldots,x_n\}}$, we have that $\bar{I}^{ec} \cap A = I^{ec}$.
\end{lemma}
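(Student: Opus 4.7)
The plan is to prove the two inclusions $I^{ec} \subseteq \bar{I}^{ec} \cap A$ and $\bar{I}^{ec} \cap A \subseteq I^{ec}$ separately. The first is immediate: if $f \in I^{ec}$ then $f \in A$ and $sf \in I$ for some $s \in S$; since $S \subseteq \bar{S}$ (the defining condition $\textsc{lt}(\cdot) < 1$ depends only on the semigroup order, which lives on the exponent lattice and does not involve the coefficient field) and $I \subseteq \bar{I}$, the same $s$ certifies $f \in \bar{I}^{ec}$, hence $f \in \bar{I}^{ec} \cap A$.

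For the reverse inclusion, I would take $f \in \bar{I}^{ec} \cap A$ and fix $\bar{s} = 1 + \bar{g} \in \bar{S}$ together with $f_1, \ldots, f_r \in I$ and $\bar{h}_1, \ldots, \bar{h}_r \in \bar{K}[x_1, \ldots, x_n]$ satisfying $\bar{s} f = \sum_{j=1}^{r} \bar{h}_j f_j$. Next I would choose a finite subextension $K \subseteq L \subseteq \bar{K}$ containing every coefficient of $\bar{g}$ and of every $\bar{h}_j$, together with a $K$-basis $e_0 = 1, e_1, \ldots, e_{d-1}$ of $L$. Because $L[x_1, \ldots, x_n] \cong A \otimes_K L$ is free over $A$ with basis $\{e_i\}_{i=0}^{d-1}$, one can expand uniquely $\bar{g} = \sum_i e_i g_i$ and $\bar{h}_j = \sum_i e_i h_{j,i}$ with $g_i, h_{j,i} \in A$. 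Every monomial occurring in some $g_i$ already occurs in $\bar{g}$ (its $L$-coefficient just splits across the basis $\{e_i\}$), so each nonzero $g_i$ still satisfies $\textsc{lt}(g_i) < 1$, and in particular $s := 1 + g_0 \in S$.

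Plugging the expansions into the original identity, the left-hand side becomes $e_0(1 + g_0) f + \sum_{i \geq 1} e_i (g_i f)$ and the right-hand side becomes $\sum_i e_i \bigl( \sum_j h_{j,i} f_j \bigr)$; equating the coefficients of $e_0 = 1$, using the $A$-linear independence of $\{e_i\}$ inside $L[x_1, \ldots, x_n]$, yields $(1 + g_0) f = \sum_{j=1}^r h_{j,0} f_j \in I$, so $sf \in I$ and $f \in I^{ec}$. The main technical obstacle is exactly this descent step, namely ensuring that the leading-term condition $\textsc{lt}(\bar{g}) < 1$ really transfers to every component $g_i$; this is controlled by the observation that the semigroup order is defined on $\mathbb{Z}^n_{\geq 0}$ and is identical whether we work over $K$ or over $\bar{K}$, so the inequality passes monomial-by-monomial. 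Once this is in hand, the rest of the argument is a routine comparison of coefficients with respect to the fixed $K$-basis of $L$.
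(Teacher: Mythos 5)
Your proof is correct, but it takes a genuinely different route from the paper's. The paper argues computationally: since the generators of $I$ lie in $A$ and the semigroup order is the same, running the standard-basis (Mora) algorithm for $S^{-1}I$ over $K$ and for $\bar{S}^{-1}\bar{I}$ over $\bar{K}$ produces the \emph{same} finite set $F \subset A$; then for $f \in A$, membership in either extension is decided by the same weak-normal-form reduction against $F$, so $I^e \cap A = \bar{I}^e \cap A$ and the claim follows by intersecting with $A$. Your argument is instead a direct algebraic descent: you descend a syzygy $\bar{s}f = \sum_j \bar{h}_j f_j$ from $\bar{K}$ to a finite subextension $L$, expand over a $K$-basis of $L$ containing $1$, observe that the leading-term condition $\textsc{lt}(\bar{g})<1$ passes monomial-by-monomial to each $K$-component $g_i$ (so $1+g_0 \in S$), and extract the $e_0$-component of the identity using freeness of $L[x_1,\ldots,x_n]$ over $A$. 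Your proof is more self-contained and avoids invoking the algorithmic machinery of standard bases and weak normal forms; the paper's is shorter but presupposes that the reader accepts field-independence of the standard-basis computation. Both establish the result; yours also makes explicit the one genuinely delicate point — that the order on $\mathbb{Z}^n_{\ge 0}$, and hence the set $S$, sees only monomials and not coefficients — which the paper leaves implicit in the phrase ``the computation \ldots are the same.''
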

\begin{proof}
For any semigroup order $>$, the computation of standard bases of $S^{-1}I$ and $\bar{S}^{-1}\bar{I}$ w.r.t. $>$ in the respective localized rings are the same, (note that $S$ and $\bar{S}$ are different sets). Then we can obtain a finite set of polynomials $F$ in $A$ as their standard bases. For a polynomial $f \in A$, we have that $f \in \bar{I}^e$ if and only if the weak normal form is $0$, if and only if $f \in I^e$. So, $\bar{I}^{e} \cap A = I^{e} \cap A$, i.e. $\bar{I}^{ec} \cap A = I^{ec}$.
\end{proof}

\begin{theorem} \label{Lemma:dim}
For a semigroup order $>$ on $\textup{T}^{\{x_1,\ldots,x_n\}}$, we have that $\dim(\textup{Loc}_>(A)/I^e) = \dim(A/I^{ec})$ for any ideal $I \subset A$ if and only if $>$ is a control order.
\end{theorem}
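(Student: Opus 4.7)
Write $J=\langle x_{j_1},\ldots,x_{j_k}\rangle$, where $x_{j_1},\ldots,x_{j_k}$ are the local variables under $>$. Since extensions commute with contractions in the sense that $I^e=S^{-1}(I^{ec})$, primes of $\textup{Loc}_>(A)$ containing $I^e$ correspond bijectively to primes $P\subset A$ with $I^{ec}\subset P$ and $P\cap S=\emptyset$. In particular $\dim(\textup{Loc}_>(A)/I^e)\leq\dim(A/I^{ec})$ holds without any hypothesis on $>$, so the content of the theorem is whether the reverse inequality always holds (the $(\Leftarrow)$ direction) or fails on some specific $I$ (the $(\Rightarrow)$ direction).

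\textbf{($\Leftarrow$) direction.} Assume $>$ is a control order. Pick a minimal prime $P$ of $I^{ec}$ realising $\dim(A/P)=\dim(A/I^{ec})$. Since $P=\sqrt{Q_i}$ for some preserved primary component of $I$, the equivalences in the proof of \mbox{Theorem \ref{decomposition}} give $P+J\neq A$, so $P+J$ is contained in some maximal ideal $\mathfrak{m}\subset A$. Then $\mathfrak{m}\supset J$, so $\mathfrak{m}+J=\mathfrak{m}\neq A$, and \mbox{Theorem \ref{decomposition}} yields $\mathfrak{m}\cap S=\emptyset$. By \mbox{Proposition \ref{Prop:A}}, $\mathfrak{m}$ has height $n$ and $A$ is catenary, so there is a saturated chain $P=P_0\subsetneq P_1\subsetneq\cdots\subsetneq P_d=\mathfrak{m}$ of length $d=n-\operatorname{ht}(P)=\dim(A/P)$. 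For each intermediate $P_i$, both $P_i\subset\mathfrak{m}$ and $J\subset\mathfrak{m}$ give $P_i+J\subset\mathfrak{m}\neq A$, whence $P_i\cap S=\emptyset$ by \mbox{Theorem \ref{decomposition}}. Localising, $S^{-1}P_0\subsetneq\cdots\subsetneq S^{-1}P_d$ is a length-$d$ chain of primes in $\textup{Loc}_>(A)$ above $I^e$, so $\dim(\textup{Loc}_>(A)/I^e)\geq d=\dim(A/I^{ec})$.

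\textbf{($\Rightarrow$) direction.} Suppose $>$ is not a control order. Following the second half of the proof of \mbox{Theorem \ref{decomposition}}, choose $t_1\in\textup{T}^{\{x_{j_1},\ldots,x_{j_k}\}}\setminus\{1\}$, $t_2\in\textup{T}^{\{x_{j_{k+1}},\ldots,x_{j_n}\}}\setminus\{1\}$ with $t_1t_2>1$, and set $I=\langle t_1t_2-1\rangle$. \mbox{Proposition \ref{Eisenbud}} factors $I$ into principal primary components $\langle p_i^{e_i}\rangle$, and multiplicativity of $\textsc{lt}$ forces some factor to satisfy $\textsc{lt}(p_i^{e_i})>1$, whence $\langle p_i^{e_i}\rangle\cap S=\emptyset$ exactly as in \mbox{Theorem \ref{decomposition}}. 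Thus $I^{ec}$ is a nonzero proper principal ideal (intersection of principal ideals in the UFD $A$), so $\dim(A/I^{ec})=n-1$. It remains to show $\dim(\textup{Loc}_>(A)/I^e)<n-1$, which reduces to proving that no maximal ideal of $A$ containing $I$ is disjoint from $S$. Given such $\mathfrak{m}$ and any $p\in\mathbf{V}_{\bar K}(\mathfrak{m})\subset\mathbf{V}_{\bar K}(I)$, the equation $t_1(p)t_2(p)=1$ forces some local coordinate of $p$ to be non-zero. Applying \mbox{Theorem \ref{DecForZero}} to the zero-dimensional maximal ideal $M_q$ of $\bar A$ for each $q$ in the Galois orbit of $p$ gives $M_q\cap\bar S\neq\emptyset$; combined with \mbox{Theorem \ref{Atiyah}} this yields $\bar{\mathfrak{m}}^{ec}=\bar A$, and \mbox{Lemma \ref{Lemma:bar}} then descends to $\mathfrak{m}^{ec}=\bar{\mathfrak{m}}^{ec}\cap A=A$, i.e.\ $\mathfrak{m}\cap S\neq\emptyset$. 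Since height-$n$ primes of $A$ are maximal (\mbox{Proposition \ref{Prop:A}}), any chain of primes in $A$ containing $I$ and disjoint from $S$ has top of height $\leq n-1$ and bottom of height $\geq 1$, hence length $\leq n-2$. Therefore $\dim(\textup{Loc}_>(A)/I^e)\leq n-2<n-1=\dim(A/I^{ec})$.

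\textbf{Main obstacle.} The delicate step is in the $(\Rightarrow)$ direction: verifying that \emph{every} maximal ideal of $A$ above $\langle t_1t_2-1\rangle$ meets $S$, not merely those arising from $K$-rational points. The natural analysis takes place over $\bar K$, where \mbox{Theorem \ref{DecForZero}} handles each maximal ideal of $\bar A$ individually, and the descent back to $A$ is precisely what \mbox{Lemma \ref{Lemma:bar}} was set up to do. The $(\Leftarrow)$ direction, by contrast, reduces to a clean catenary-plus-geometry argument once the maximal ideal $\mathfrak{m}$ is chosen to contain $P+J$ rather than merely $P$, because this choice forces every intermediate prime $P_i$ in the chain to satisfy $P_i+J\subset\mathfrak{m}\neq A$ automatically.
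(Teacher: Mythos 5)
Your proof is correct and follows essentially the same strategy as the paper's: both directions hinge on extending a maximal-length chain from a minimal prime $P_0$ of $I^{ec}$ up to a maximal ideal containing $J$ (hence disjoint from $S$) via the catenary property of $A$ from Proposition~\ref{Prop:A}, and on taking $\langle t_1t_2-1\rangle$ (or an irreducible factor of it with leading term $>1$, as the paper does) as the counterexample, invoking Theorem~\ref{DecForZero} together with Lemma~\ref{Lemma:bar} to show every maximal ideal above it meets $S$. Your one small streamlining in the $(\Leftarrow)$ direction is to pick an arbitrary maximal ideal $\mathfrak{m}\supset P_0+J$, which avoids the paper's explicit geometric construction of $M_p$ from a point $p\in\mathbf{V}_{\bar{K}}(P_0)\cap\mathbf{V}_{\bar{K}}(J)$.
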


\begin{proof}
``$\Leftarrow$" Since $(I^{ec})^{e} = I^e$, we have that $\dim(\textup{Loc}_>(A)/I^e) \leq \dim(A/I^{ec})$. Now we prove the converse part. By definition, the Krull dimension of the ideal $I^{ec}$ is the Krull dimension of the quotient ring $A/I^{ec}$, i.e. the maximal length $l$ of a chain of primes containing $I^{ec}$. So, there exists a maximal chain of prime ideals containing $I^{ec}$ with length $l$, i.e., $I^{ec} \subset P_0 \varsubsetneq \cdots \varsubsetneq P_l$ where $P_0,\ldots,P_l$ are prime ideals in $A$. Suppose $I=\cap_{i=1}^k Q_i$ is a minimal primary decomposition and the $Q_i$ are numbered so that $Q_i + \langle x_{j_1}, \dots, x_{j_w} \rangle \neq A$ for and only for the first $m$ $Q_i$, where $x_{j_1},\ldots,x_{j_w}$ are all of the local variables. Then, by \mbox{Theorem \ref{decomposition}}, $I^{ec}=\cap_{i=1}^m Q_i$ is a minimal primary decomposition. Thus, all the minimal prime ideals containing $I^{ec}$ are among the primes $\sqrt{Q_1}, \ldots, \sqrt{Q_m}$, and there exists a $j \in \{1,\ldots,m\}$ such that $P_0=\sqrt{Q_j}$. Thus, we have that $P_0 + \langle x_{j_1}, \dots, x_{j_w} \rangle \neq A$ and $\textbf{V}_{\bar{K}}(P_0) \cap \textbf{V}_{\bar{K}}(\langle x_{j_1}, \dots, x_{j_w} \rangle)$ is not empty in $\bar{K}^n$. Take a point $p \in \textbf{V}_{\bar{K}}(P_0) \cap \textbf{V}_{\bar{K}}(\langle x_{j_1}, \dots, x_{j_w} \rangle)$. Then we can construct a maximal ideal $M_p$ in $A$ as the maximal subset of polynomials passing $p$ as in the proof of \mbox{Theorem \ref{Th:powerful}}. It is easy to see $M_p+\langle x_{j_1}, \ldots, x_{j_w} \rangle \neq A$ by the Nullstellensatz. We can obtain a maximal chain of primes containing $I^{ec}$ in the form $I^{ec} \subset W_0 \varsubsetneq \cdots \varsubsetneq W_{l^*}$ where $W_0,\ldots,W_{l^*}$ are all prime ideals with $W_0=P_0$ and $W_{l^*}=M_p$. Then Proposition \ref{Prop:A} implies $l=l^*$. Moreover, for every $i \in \{0,\ldots,l\}$ we know $W_i + \langle x_{j_1}, \dots, x_{j_w} \rangle \neq A$ which implies $I^e \subset {W_0}^e \varsubsetneq \cdots \varsubsetneq {W_{l}}^e$ is a chain of prime ideals containing $I^e$ in $\textup{Loc}_>(A)$. So, $\dim(\textup{Loc}_>(A)/I^e) \geq \dim(A/I^{ec})$.

``$\Rightarrow$" Suppose $>$ is not a control order and $x_{j_1},\ldots,x_{j_w}$ are all of the local variables. Then there exists a term $t \in \textup{T}^{\{x_1,\ldots,x_n\}}$ such that it is larger than $1$ and can be divided by a local variable $x_j$. Take an irreducible (also prime) factor $f$ of $t-1$ such that $\textsc{lt}(f)>1$. We know that $\langle f \rangle$ is prime, $\langle f \rangle ^{ec} = \langle f \rangle $ and $\textbf{V}_{\bar{K}}(f) \cap \textbf{V}_{\bar{K}}(\langle x_{j_1},\ldots,x_{j_w} \rangle) = \emptyset$ (otherwise, $t-1$ can vanish at a point when $x_{j_1}= \cdots = x_{j_w} = 0$, a contradiction with $x_j | t$). Every maximal chain $C_1$ of prime ideals containing $\langle f \rangle ^e$ corresponds to a chain $C_2$ of primes containing $\langle f \rangle$ in $A$. However, $C_2$ is not maximal in $A$, since the maximal prime ideal in $C_2$ is not a maximal ideal in $A$. (Otherwise, consider a maximal ideal $M$ in $C_2$. By definition, $M$ is a zero-dimensional ideal. Since $\bar{M}$ and $M$ have the same generating set, we know that $\textbf{V}_{\bar{K}}(\bar{M}) = \textbf{V}_{\bar{K}}(M) \subset \textbf{V}_{\bar{K}}(f) \subset \textbf{V}_{\bar{K}}(t-1)$ and consequently $\textbf{V}_{\bar{K}}(\bar{M})$ consists of finitely many isolated points in $\bar{K}^n$.  Then we have a minimal primary decomposition $\bar{M} = \cap_{i=1}^m Q_i$ in $\bar{K}[x_1,\ldots,x_n]$ where $Q_i$ are primary ideals corresponding to each point in $\textbf{V}_{\bar{K}}(\bar{M})$. By Theorem \ref{DecForZero}, $\bar{M}^{ec} = \bar{K}[x_1,\ldots,x_n]$. Thus, $M^{ec} = \bar{M}^{ec} \cap A = A$ by \mbox{Lemma \ref{Lemma:bar}}, a contradiction.) Thus, $C_2$ can be extended to another chain $C_3$ of primes containing $\langle f \rangle$ with $\# C_3 \geq \# C_2 +1$. So, $\dim(A/\langle f \rangle ^{ec}) > \dim(\textup{Loc}_>(A)/\langle f \rangle ^e)$, a contradiction.
\end{proof}

\begin{corollary} \label{Corollary:dim}
Let $>$ be a control order as in Theorem \ref{Th:Projection}. Then, the Krull dimensions of $\beta(I,J)$ and $I'^e$ are equal.
\end{corollary}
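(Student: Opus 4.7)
The plan is to reduce the claim to Theorem \ref{Lemma:dim} applied to the lifted ideal $I' \subset A'$. Since the hypothesis of the corollary takes $>$ to be a control order on $\textup{T}^{\{x_1,\ldots,x_{n+s}\}}$, Theorem \ref{Lemma:dim} (applied with $A$ replaced by $A'$ and $I$ replaced by $I'$) immediately gives
\[
\dim\bigl(\textup{Loc}_>(A')/I'^{e}\bigr) \;=\; \dim\bigl(A'/I'^{ec}\bigr).
\]
Thus the corollary will follow once I establish that $\dim(A'/I'^{ec}) = \dim(A/\beta(I,J))$.

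To get that second equality, the idea is to exhibit an explicit ring isomorphism $A'/I'^{ec} \cong A/\beta(I,J)$ coming from the substitution $x_{n+t} \mapsto g_t$. Concretely, consider the surjection $\phi \colon A' \twoheadrightarrow A$ that is the identity on $x_1,\ldots,x_n$ and sends $x_{n+t}$ to $g_t$ for $t=1,\ldots,s$; its kernel is the ideal $E:=\langle x_{n+1}-g_1,\ldots,x_{n+s}-g_s\rangle$. Since $E \subset I' \subset I'^{ec}$, the map $\phi$ factors through $A'/I'^{ec}$, and by Theorem \ref{Th:Projection} its image on $I'^{ec}$ is precisely $I'^{ec}|_{x_{n+1}=g_1,\ldots,x_{n+s}=g_s} = \beta(I,J)$. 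This gives a surjection $\bar{\phi}\colon A'/I'^{ec} \twoheadrightarrow A/\beta(I,J)$. For injectivity, given $\tilde f \in A'$ with $\bar\phi(\tilde f + I'^{ec})=0$, one has $\tilde f|_{x_{n+1}=g_1,\ldots,x_{n+s}=g_s} \in \beta(I,J) = I'^{ec}|_{x_{n+1}=g_1,\ldots,x_{n+s}=g_s} \subset I'^{ec}$, while $\tilde f - \tilde f|_{x_{n+1}=g_1,\ldots,x_{n+s}=g_s} \in E \subset I'^{ec}$, so $\tilde f \in I'^{ec}$. Hence $\bar\phi$ is an isomorphism, and Krull dimensions agree on the two sides.

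Chaining the two equalities yields the corollary. There is no real obstacle here: both ingredients are handed to us by Theorem \ref{Lemma:dim} and Theorem \ref{Th:Projection}, and the only small subtlety is keeping straight that $E$ lies not merely in $I'$ but inside $I'^{ec}$ so that the substitution map factors cleanly through the contraction. (One could alternatively bypass constructing $\bar\phi$ by hand and invoke the third isomorphism theorem $A'/I'^{ec} \cong (A'/E)/(I'^{ec}/E) \cong A/\phi(I'^{ec})$ together with $\phi(I'^{ec}) = \beta(I,J)$, but the kernel computation is the same.)
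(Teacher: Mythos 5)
Your proposal is correct and follows the same route the paper intends: the paper's own proof is the one-liner ``It is obviously true by Theorem \ref{Th:Projection} and Theorem \ref{Lemma:dim},'' and you have simply unpacked what that ``obviously'' means --- apply Theorem \ref{Lemma:dim} in the ambient ring $A'$ to get $\dim(\textup{Loc}_>(A')/I'^{e}) = \dim(A'/I'^{ec})$, and then use Theorem \ref{Th:Projection} to identify $A'/I'^{ec}$ with $A/\beta(I,J)$ via the substitution map $\phi$ with kernel $E \subset I' \subset I'^{ec}$. The details you supply (that $E \subset I'^{ec}$, that $\phi(I'^{ec}) = \beta(I,J)$, and that $\beta(I,J) \subset I'^{ec}$ so the induced map is injective) are all correct.
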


\begin{proof}
It is obviously true by Theorem \ref{Th:Projection} and Theorem \ref{Lemma:dim}.
\end{proof}

\section{Basic Properties of Extractions} \label{Sec:Basic}

In this section, we prove some equalities about different combinations of extractions and other basic ideal operations on polynomial ideals. Before introducing these results, we need a lemma which is used in proving the third equality in Proposition \ref{Prop:BasicProperties}.

\begin{lemma}\label{Lemma:E_CPart}
Let $I$ and $J$ be two polynomial ideals in $A$, and $I=\cap_{i=1}^k Q_i$ a primary decomposition (not necessarily minimal). Let $E=\{Q_i: Q_i+J \neq A\}$. Then $\beta(I,J)=\cap_{Q \in E} Q$.
\end{lemma}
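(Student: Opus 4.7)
My plan is to transform the given primary decomposition $I = \bigcap_{i=1}^k Q_i$ into a minimal primary decomposition by the two standard operations --- merging components that share a radical, and deleting redundant components --- while checking at each step that $\bigcap_{Q \in E} Q$ is preserved. Once this reduction is complete, Definition \ref{Def:CPart} will give the desired equality with $\beta(I,J)$.

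The merging step is easy. If $Q_i$ and $Q_j$ ($i \neq j$) share the radical $P$, I would replace them by the $P$-primary ideal $Q_i \cap Q_j$. Since $Q + J \neq A$ is equivalent to $\sqrt{Q} + J \neq A$ (as used already in the proof of Proposition \ref{Prop:WellDefined}), the three ideals $Q_i$, $Q_j$, $Q_i \cap Q_j$ simultaneously belong or fail to belong to $E$; hence $\bigcap_{Q \in E} Q$ is unchanged.

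The removal step is the crux. Assume now that all radicals in $I = \bigcap_{l=1}^k Q_l$ are distinct and that $Q_i$ is redundant, that is, $\bigcap_{l \neq i} Q_l \subset Q_i$. If $Q_i \notin E$ there is nothing to do, so assume $Q_i \in E$; I then need $\bigcap_{l \neq i,\, Q_l \in E} Q_l \subset Q_i$. To obtain this, I would localize at $T := A \setminus P_i$ with $P_i := \sqrt{Q_i}$: for every $l \neq i$, either $P_l \not\subset P_i$ and $T^{-1} Q_l = T^{-1} A$, or $P_l \subset P_i$ and $(T^{-1} Q_l)^c = Q_l$ by primariness. Hence $T^{-1}\bigl(\bigcap_{l \neq i} Q_l\bigr) = \bigcap_{l \neq i,\, P_l \subset P_i} T^{-1} Q_l$, whose contraction to $A$ equals $\bigcap_{l \neq i,\, P_l \subset P_i} Q_l$. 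Extending $\bigcap_{l \neq i} Q_l \subset Q_i$ to $T^{-1} A$ and contracting back, together with $(T^{-1} Q_i)^c = Q_i$, forces $\bigcap_{l \neq i,\, P_l \subset P_i} Q_l \subset Q_i$. Finally, $Q_i \in E$ yields $P_i + J \neq A$, so any $l$ with $P_l \subset P_i$ satisfies $P_l + J \subset P_i + J \neq A$, i.e.\ $Q_l \in E$; the chain $\bigcap_{l \neq i,\, Q_l \in E} Q_l \subset \bigcap_{l \neq i,\, P_l \subset P_i} Q_l \subset Q_i$ finishes this step.

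The main obstacle is this very $Q_i \in E$ case of removal: a redundant $Q_i$ need not contain any single $Q_l$ --- for instance, in $\langle xy \rangle = \langle x\rangle \cap \langle y\rangle \cap \langle x,y\rangle^2$ the redundant $\langle x,y\rangle^2$ contains neither $\langle x\rangle$ nor $\langle y\rangle$ --- so a purely combinatorial argument does not go through, and the localization at $A \setminus P_i$ is the tool I rely on to transfer the containment from the full intersection down to the $E$-restricted one. Once both operations preserve $\bigcap_{Q \in E} Q$, a finite iteration converts the given decomposition into a minimal one without changing this intersection, and Definition \ref{Def:CPart} then gives the lemma.
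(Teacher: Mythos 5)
Your proof is correct, and it takes a genuinely different route through the removal step than the paper does. Both arguments handle merging identically and both reach for localization to deal with the fact (nicely illustrated by your $\langle xy\rangle$ example) that a redundant primary component need not contain any single other component. The paper performs all removals at once: it defines $D$ to be the set of radicals of the surviving components $F$, observes $D$ is closed under going down among the $\sqrt{M_i}$, and localizes at the single multiplicatively closed set $S := A \setminus \bigcup_{P \in D} P$; prime avoidance then shows $(S^{-1}M_i)^c$ is $M_i$ or $A$ exactly according to membership in $F$ (resp.\ $G$), so contracting $S^{-1}I$ along both decompositions gives $\bigcap_{M\in G}M = \bigcap_{M\in F}M$ in one stroke. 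You instead remove one redundant $Q_i$ at a time, localizing at the prime complement $T = A \setminus P_i$, using the standard facts $(T^{-1}Q_l)^c = Q_l$ when $P_l \subset P_i$ and $T^{-1}Q_l = T^{-1}A$ otherwise, and then pulling the containment $\bigcap_{l\neq i}Q_l \subset Q_i$ down to $\bigcap_{l\neq i,\, P_l \subset P_i}Q_l \subset Q_i$; the observation that $P_l \subset P_i$ and $Q_i \in E$ force $Q_l \in E$ closes the loop. Your version is somewhat more elementary --- it localizes at a single prime and needs no prime avoidance --- at the cost of iterating the argument, whereas the paper's choice of $S$ finishes in one pass. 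Both rely on the same two pillars: exactness of localization, and the equivalence $Q + J \neq A \iff \sqrt{Q} + J \neq A$ (which is what makes the relevant index sets ``going-down closed'').
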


\begin{proof}
We can arrange this decomposition to get a minimal one. The first step is to intersect all the primary ideals $Q_i$ with the same radicals to obtain a primary decomposition $I=\cap_{i=1}^l M_i$ such that $\sqrt{M_i}$ are distinct. Next, delete the redundant primary ideals $M_j$ with $M_j \supset \cap_{i \neq j}M_i$ to get a minimal primary decomposition $I=\cap_{i=1}^r M_i$ (we suppose $M_{r+1} , \ldots, M_{l}$ are redundant ones).

In the first step, suppose $M_i = Q_{i_1} \cap \cdots \cap Q_{i_k}$ and let $F:=\{M_i: M_i +J \neq A, i\in \{1,\ldots,l\}\}$. We can easily see that $M_i+J \neq A$ if and only if $Q_t + J \neq A$ for an arbitrary $t \in \{i_1,\ldots,i_k\}$. Thus, $\cap_{Q \in E} Q=\cap_{M \in F} M$.

In the second step, we have $\cap_{i=1}^r M_i=I=\cap_{i=1}^l M_i$ and $\sqrt{M_1},\ldots,\sqrt{M_l}$ are distinct. Let $G:=\{M_i: M_i +J \neq A, i\in \{1,\ldots,r\}\} \subset F$. Then $C:=\{\sqrt{M}:M \in G \}$ is an isolated set of prime ideals of $I$. Note that $D:=\{\sqrt{M}:M \in F \} \supset C$ also has the property that if $\sqrt{M_1} \in D$ and $\sqrt{M_2} \subset \sqrt{M_1}$ then $\sqrt{M_2} \in D$. Let $S:=A - \cup_{P \in D} P$. Then for every $i=1,\ldots, l$, by prime avoidance, we have that $S \cap \sqrt{M_i}= \emptyset$ if and only if $M_i \in F$; and for every $i=1,\ldots, r$, we have $S \cap \sqrt{M_i}= \emptyset$ if and only if $M_i \in G$. Consider $\cap_{i=1}^r(S^{-1} M_i)^c=(S^{-1}(\cap_{i=1}^r M_i))^c=(S^{-1}I)^c=(S^{-1}(\cap_{i=1}^l M_i))^c=\cap_{i=1}^l(S^{-1} M_i)^c$. Therefore, $\cap_{M \in G} M = \cap_{M \in F} M = \cap_{Q \in E} Q$, i.e. $\beta(I,J)=\cap_{Q \in E} Q$.
\end{proof}

\begin{proposition} \label{Prop:BasicProperties}
Let $I$, $J$, $H$ and $L$ be ideals in $A$. Then the following statements hold.
\begin{enumerate}
  \item $\beta(I,\sqrt{J})=\beta(I,J)$;
  \item $\beta(\sqrt{I},J)=\sqrt{\beta(I,J)}$;
  \item $\beta(I \cap H,J)=\beta(I,J) \cap \beta(H,J)$;
  \item $\beta(I,J \cap L)=\beta(I,J) \cap \beta(I,L)$;
  \item $\beta(\beta(I,J),J))=\beta(I,J)$;
  \item $\beta(I,\beta(J,I)))=\beta(I,J)$;
  \item $\beta(\beta(I,J),L))=\beta(\beta(I,L),J))$.
\end{enumerate}
\end{proposition}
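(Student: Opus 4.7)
The plan is to reduce all seven identities to elementary manipulations with minimal primary decompositions of $I$ (and of $J$ in item~(6)), using one pivotal observation. For any primary ideal $Q \subset A$ and any ideal $J \subset A$, the condition $Q + J \neq A$ depends only on radicals and on varieties: it is equivalent to $\sqrt{Q} + J \neq A$, to $Q + \sqrt{J} \neq A$, and, by the Nullstellensatz, to $\mathbf{V}_{\bar{K}}(Q) \cap \mathbf{V}_{\bar{K}}(J) \neq \emptyset$. This is immediate from $1 \in Q+J$ iff $1 \in \sqrt{Q+J} \supseteq \sqrt{Q}+\sqrt{J}$, combined with the Nullstellensatz.

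Items (1)--(4) are then direct. For (1), fixing a minimal primary decomposition $I = \cap_{i=1}^k Q_i$, the set of indices with $Q_i + J \neq A$ coincides with the set where $Q_i + \sqrt{J} \neq A$. For (2), the decomposition $\sqrt{I} = \cap \sqrt{Q_i}$ is a (not necessarily minimal) primary decomposition, and Lemma~\ref{Lemma:E_CPart} gives $\beta(\sqrt{I},J) = \cap_{Q_i + J \neq A} \sqrt{Q_i} = \sqrt{\cap_{Q_i + J \neq A} Q_i} = \sqrt{\beta(I,J)}$. For (3), superposing minimal primary decompositions of $I$ and $H$ yields a primary decomposition of $I \cap H$; Lemma~\ref{Lemma:E_CPart} then separates the selected components into those contributing to $\beta(I,J)$ and those contributing to $\beta(H,J)$. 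For (4), using a minimal primary decomposition of $I$ together with $\mathbf{V}_{\bar{K}}(J \cap L) = \mathbf{V}_{\bar{K}}(J) \cup \mathbf{V}_{\bar{K}}(L)$, I would verify that $Q_i + (J \cap L) \neq A$ iff $Q_i + J \neq A$ or $Q_i + L \neq A$, and rearrange the intersection.

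For (5) and (7) the crucial preliminary fact is that the expression $\beta(I,J) = \cap_{i=1}^m Q_i$ is itself a minimal primary decomposition of $\beta(I,J)$, because $\{\sqrt{Q_1}, \ldots, \sqrt{Q_m}\}$ is an isolated set of associated primes of $I$ (this was observed in the proof of Proposition~\ref{Prop:WellDefined}). Consequently $\beta(\beta(I,J), L) = \cap_{\{i\,:\, Q_i + J \neq A \text{ and } Q_i + L \neq A\}} Q_i$. Setting $L = J$ gives (5), and the symmetry in $J$ and $L$ gives (7).

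The main obstacle is (6). Fix minimal primary decompositions $I = \cap Q_i$ and $J = \cap R_j$. One inclusion is cheap: since $\beta(J,I) = \cap_{R_j + I \neq A} R_j \supseteq J$, we get $Q_i + \beta(J,I) \supseteq Q_i + J$, so $Q_i + J = A$ forces $Q_i + \beta(J,I) = A$. For the converse, assume $Q_i + J \neq A$ and pass to $\bar{K}^n$: pick a point $p \in \mathbf{V}_{\bar{K}}(Q_i) \cap \mathbf{V}_{\bar{K}}(J) = \mathbf{V}_{\bar{K}}(Q_i) \cap \bigcup_j \mathbf{V}_{\bar{K}}(R_j)$, so that $p \in \mathbf{V}_{\bar{K}}(R_j)$ for some $j$. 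Because also $p \in \mathbf{V}_{\bar{K}}(Q_i) \subseteq \mathbf{V}_{\bar{K}}(I)$, this $R_j$ satisfies $R_j + I \neq A$ and hence survives in $\beta(J,I)$. Then $p \in \mathbf{V}_{\bar{K}}(Q_i) \cap \mathbf{V}_{\bar{K}}(\beta(J,I))$, giving $Q_i + \beta(J,I) \neq A$. With both directions, the same set of primary components of $I$ is selected by $J$ and by $\beta(J,I)$, proving (6). I expect this point-chasing step to be the only non-mechanical part; everything else is bookkeeping built on the opening observation, Lemma~\ref{Lemma:E_CPart}, and the minimality of $\beta(I,J) = \cap_{i=1}^m Q_i$.
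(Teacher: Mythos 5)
Your proposal is correct and for items (1), (3), (4), (5), (7) it runs along essentially the same lines as the paper (fix a minimal primary decomposition, characterize which components are kept via $Q_i + J \neq A$, and use Lemma~\ref{Lemma:E_CPart} for the merged decomposition in (3); the minimality of $\beta(I,J) = \cap_{i=1}^m Q_i$, which you justify via the isolated-set-of-primes observation in Proposition~\ref{Prop:WellDefined}, handles (5) and (7) exactly as the paper does). Two places differ in route, not in substance. For (2), the paper explicitly passes to a minimal primary decomposition of $\sqrt{I}$ built out of the minimal primes of $I$; you instead feed the not-necessarily-minimal decomposition $\sqrt{I}=\cap\sqrt{Q_i}$ directly into Lemma~\ref{Lemma:E_CPart} and use $\cap\sqrt{Q_i}=\sqrt{\cap Q_i}$. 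That is a mild simplification — the lemma was built precisely so you do not need to re-minimize. For (6), the paper argues entirely algebraically: the key step is the product identity $\prod_{l}(1-w_l)=1-w$ showing that $W_i+U_l=A$ for all $l$ forces $W_i+J=A$, together with the inclusion $I\subset W_i$ giving $W_i+U_l\neq A \Rightarrow U_l+I\neq A$. You instead pass to $\bar K^n$ and pick a point $p$ in $\mathbf{V}_{\bar K}(Q_i)\cap\mathbf{V}_{\bar K}(J)$, trace it through the decomposition $\mathbf{V}_{\bar K}(J)=\cup_j\mathbf{V}_{\bar K}(R_j)$, and conclude geometrically. Both arguments are correct; the paper's is purely ideal-theoretic and does not invoke the Nullstellensatz for this step, while yours is shorter to write once one is comfortable working in $\bar K^n$, and it makes the intuition ($J$ and $\beta(J,I)$ meet the same components of $\mathbf{V}(I)$) completely transparent.
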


\begin{proof}
Let $I= \cap_{i=1}^k W_i$ and $J= \cap_{l=1}^r U_l$ be minimal primary decompositions.

(1) For any $W_i$, $W_i+J \neq A$ if and only if $W_i+\sqrt{J} \neq A$. Thus, $\beta(I,\sqrt{J})=\beta(I,J)$.

(2) We also have $W_i+J \neq A$ if and only if $\sqrt{W_i}+J \neq A$ for any $W_i$. From the minimal primary decomposition $I=\cap_{i=1}^k{W_i}$, we can get a minimal primary decomposition $\sqrt{I}= \cap \{ P : \exists i \in \{1,\ldots,k\} \textup{ s.t. } P=\sqrt{W_i} \textup{ and $P$ is a minimal prime belonging to $I$}\}$. Suppose we have already numbered these $W_i$ such that the first $m$ ones satisfy the condition that $W_i + J \neq A$. Then, $\beta(\sqrt{I},J)=\cap \{ P : \exists i \in \{1,\ldots,m\} \textup{ s.t. } P=\sqrt{W_i} \wedge \textup{$P$ minimal}\}$. Therefore, $\sqrt{\beta(I,J)}= \sqrt{\cap_{i=1}^m W_i} = \cap_{i=1}^m \sqrt{W_i}= \cap \{ P : \exists i \in \{1,\ldots,m\} \textup{ s.t. } P=\sqrt{W_i} \wedge \textup{$P$ minimal}\} = \beta(\sqrt{I},J)$. The third equality in the last sentence uses the fact that $\{\sqrt{W_i}:i=1,\dots,m\}$ is an isolated set of prime ideals belonging to $I$ (see the proof of \mbox{Proposition \ref{Prop:WellDefined}}).

(3) Let $H= \cap_{j=1}^s N_j$ is a minimal primary decomposition. Consider the primary decomposition $I \cap H = (\cap_{i=1}^k W_i) \cap (\cap_{j=1}^s N_j)$. It is not necessarily minimal. Let $E_1= \{W_i: W_i+J \neq A \}$ and $E_2= \{N_j: N_j+J \neq A \}$.  By Lemma \ref{Lemma:E_CPart}, we know $\beta(I \cap H, J)= \cap _{Q \in E_1 \cup E_2} Q = (\cap_{Q \in E_1} Q) \cap (\cap_{Q \in E_2} Q) = \beta(I,J) \cap \beta(H,J)$.

(4) $W_i + (J \cap L) \neq A$ if and only if $W_i + J \neq A$ or $W_i +L \neq A$.

(5) It is obvious by Definition \ref{Def:CPart}.

(6) For a fixed $W_i$, if $W_i+U_l = A$ for every $l=1,\ldots,r$, then we have $r$ equalities $u_l=1-w_l$ where $u_l \in U_l$ and $w_l \in W_i$. Thus, $u:=\prod_{l=1}^r u_l = \prod_{l=1} ^r (1-w_l) := 1 - w$. We can see $u \in \cap_{l=1}^r U_l =J$ and $w \in W_i$, and consequently, $W_i + J =A$.  Therefore, $W_i$ is a component of $\beta(I,J)$ if and only if $W_i+J \neq A$, if and only if there exists an $l \in \{1,\ldots,r\}$ such that $W_i+U_l \neq A$, if and only if $(W_i+U_l \neq A) \wedge (U_l + I \neq A)$, if and only if $U_l$ is a component of $\beta(J,I)$ and $W_i+U_l \neq A$, if and only if $W_i + \beta(J,I) \neq A$, if and only if $W_i$ is a component of $\beta(I,\beta(J,I))$. As a result, $\beta(I,\beta(J,I)))=\beta(I,J)$.

(7) $W_i$ is a component of $\beta(\beta(I,J),L))$ if and only if $W_i + J \neq A$ and $W_i +L \neq A$, if and only if $W_i +L \neq A$ and $W_i + J \neq A$, if and only if $W_i$ is a component of $\beta(\beta(I,L),J))$. Thus, $\beta(\beta(I,J),L))=\beta(\beta(I,L),J))$.
\end{proof}

We give another proof of the third equality in the above proposition after the following easy lemma.

\begin{lemma} \label{Lemma:IH}
For any two polynomial ideals $H, I\subset A$ and a control ideal $J \subset A$, we have $(I \cap H )'=I' \cap H'$.
\end{lemma}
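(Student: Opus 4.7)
The plan is to prove the two inclusions separately. The containment $(I\cap H)' \subset I' \cap H'$ is immediate from the definition: since $I\cap H \subset I$ and $I\cap H \subset H$, and both $I'$ and $H'$ contain the relations $t_1-g_1,\ldots,t_s-g_s$, every generator of $(I\cap H)'$ lies in both $I'$ and $H'$.

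For the reverse inclusion $I' \cap H' \subset (I\cap H)'$, the key tool is the substitution homomorphism $\varphi : A' \to A$ defined by $\varphi(x_i)=x_i$ and $\varphi(t_j)=g_j$, i.e. $\varphi(\tilde f)=\tilde f|_{t_1=g_1,\ldots,t_s=g_s}$. The inclusion $A \hookrightarrow A'$ is a section of $\varphi$, so every $\tilde f \in A'$ decomposes as
\[
\tilde f \;=\; \varphi(\tilde f) \;+\; \bigl(\tilde f - \varphi(\tilde f)\bigr),
\]
with $\varphi(\tilde f) \in A$ and $\tilde f - \varphi(\tilde f) \in \ker \varphi = \langle t_1-g_1,\ldots,t_s-g_s\rangle$ (the latter because the remainder of $\tilde f$ after substituting $t_j=g_j$ can be realized as a combination of the $t_j-g_j$, as was already observed in the proof of Theorem~\ref{Th:lift}).

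Next I would verify the auxiliary fact $I' \cap A = I$ (and likewise $H' \cap A = H$). The inclusion $I \subset I' \cap A$ is trivial; conversely, any $f \in I' \cap A$ equals $\varphi(f)$, and applying $\varphi$ to an expression $f = \sum a_i f_i + \sum b_j(t_j-g_j)$ (with $f_i \in I$, $a_i,b_j\in A'$) shows $\varphi(f)=\sum \varphi(a_i)f_i \in I$. Equivalently, $\varphi(I') \subset I$ because $\varphi$ sends generators of $I'$ to elements of $I$.

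Combining these, for any $\tilde f \in I' \cap H'$, we have $\varphi(\tilde f) \in \varphi(I') \cap \varphi(H') \subset I \cap H$, while $\tilde f - \varphi(\tilde f) \in \langle t_1-g_1,\ldots,t_s-g_s\rangle \subset (I\cap H)'$. Adding the two pieces yields $\tilde f \in (I\cap H)'$, which finishes the proof. The only substantive step is the observation $\varphi(I') \subset I$, i.e. that reduction modulo the $(t_j-g_j)$-relations lands a polynomial of $I'$ inside $I$; everything else is a bookkeeping consequence of the splitting $A' = A \oplus \ker\varphi$.
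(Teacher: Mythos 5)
Your proposal is correct and follows essentially the same approach as the paper: both decompose $\tilde f \in I' \cap H'$ as $\tilde f = \tilde f|_{t_1=g_1,\ldots,t_s=g_s} + \sum r_w(t_w - g_w)$, observe the first piece lies in $I \cap H$ and the second in $\langle t_1-g_1,\ldots,t_s-g_s\rangle \subset (I\cap H)'$, and conclude. You merely make the substitution homomorphism $\varphi$ and the justification of $\varphi(I')\subset I$ explicit, where the paper states the corresponding fact as ``Note that $\tilde f|_{t_1=g_1,\ldots,t_s=g_s}\in I\cap H$.''
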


\begin{proof}
First, we have $(I \cap H )' \subset I'$ and $(I \cap H )' \subset H'$, and thus $(I \cap H )' \subset I' \cap H'$. Then, for any $\tilde{f} \in I' \cap H'$, as in the proof of Theorem \ref{Th:lift}, we have $\tilde{f}=\tilde{f}|_{t_1=g_1,\ldots,t_{s}=g_{s}} + \sum_{w=1}^{s} r_w(t_w-g_w)$ where $r_w \in A'$. Note that $\tilde{f}|_{t_1=g_1,\ldots,t_{s}=g_{s}} \in I \cap H$. Thus, $\tilde{f} \in (I \cap H)'$ which implies $I' \cap H' \subset (I \cap H)'$. Therefore, $(I \cap H )'=I' \cap H'$ holds.
\end{proof}

The following proof seems easier than the original one, but needs more preparations.

\begin{proof}[Another proof of Proposition \ref{Prop:BasicProperties}(3)]
By Corollary \ref{Corollary:membership} and Lemma \ref{Lemma:IH}, we have $\beta(I \cap H,J)= (I \cap H)'^e \cap A = (I' \cap H')^e \cap A = I'^e \cap H'^e \cap A = (I'^e \cap A) \cap (H'^e \cap A) =\beta(I,J) \cap \beta(H,J)$.
\end{proof}

\bibliographystyle{elsart-harv}
\bibliography{JSC}

\begin{thebibliography}{19}
\expandafter\ifx\csname natexlab\endcsname\relax\def\natexlab#1{#1}\fi
\expandafter\ifx\csname url\endcsname\relax
  \def\url#1{\texttt{#1}}\fi
\expandafter\ifx\csname urlprefix\endcsname\relax\def\urlprefix{URL }\fi

\bibitem[{Alonso et~al.(1990)Alonso, Mora, and Raimondo}]{Alonso91}
Alonso, M.~E., Mora, T., Raimondo, M., August 1990. Local decomposition
  algorithms. In: {Proceedings of AAECC-8 (Applied Algebra, Algebraic
  Algorithms and Error-Correcting Codes)}. Lecture Notes in Computer Science
  508, Springer, Tokyo, Japan, pp. 208--221.

\bibitem[{Atiyah and MacDonald(1969)}]{Atiyah1969}
Atiyah, M.~F., MacDonald, I.~G., 1969. Introduction to Commutative Algebra.
  Addison-Wesley.

\bibitem[{Buchberger(1965)}]{Buchberger65}
Buchberger, B., 1965. Ein Algorithmus zum Auffinden der Basiselemente des
  Restklassenringes nach einem nulldimensionalen Polynomideal. Ph.D. Thesis,
  Mathematical Institute, University of Innsbruck, Austria.

\bibitem[{Buchberger(2006)}]{Buch06}
Buchberger, B., March--April 2006. Bruno {Buchberger's} {PhD} theis 1965: An
  algorithm for finding the basis elements in the residue class ring modulo a
  zero dimensional polynomial ideal. Journal of Symbolic Computation 41~(3--4),
  475--511.

\bibitem[{Cox et~al.(2005)Cox, Little, and O'Shea}]{Cox05}
Cox, D., Little, J., O'Shea, D., 2005. Using Algebraic Geometry (Second
  Edition). Springer, USA.

\bibitem[{Decker et~al.(2012)Decker, Greuel, Pfister, and Sch\"onemann}]{DGPS}
Decker, W., Greuel, G.-M., Pfister, G., Sch\"onemann, H., 2012. {\sc Singular}
  {3-1-6} --- {A} computer algebra system for polynomial computations.
  \url{http://www.singular.uni-kl.de}.

\bibitem[{Eisenbud(1994)}]{Eisenbud94}
Eisenbud, D., 1994. Commutative Algebra with a View Toward Algebraic Geometry.
  Springer.

\bibitem[{Gr\"abe(1993)}]{Graebe93}
Gr\"abe, H.-G., June 1993. Two remarks on independent sets. Journal of
  Algebraic Combinatorics 2~(2), 137--145.

\bibitem[{Gr\"abe(1994)}]{Graebe94}
Gr\"abe, H.-G., December 1994. The tangent cone algorithm and homogenization.
  Journal of Pure and Applied Algebra 97~(3), 303--312.

\bibitem[{Gr\"abe(1995)}]{Graebe95}
Gr\"abe, H.-G., June 1995. Algorithms in local algebra. Journal of Symbolic
  Computation 19~(6), 545--557.

\bibitem[{Greuel and Pfister(1996)}]{Greuel96}
Greuel, G.-M., Pfister, G., February 1996. Advances and improvements in the
  theory of standard bases and syzygies. Archiv der Mathematik 66~(2),
  163--176.

\bibitem[{Greuel and Pfister(2008)}]{Greuel07}
Greuel, G.-M., Pfister, G., 2008. A \textsc{Singular} Introduction to
  Commutative Algebra (Second, Extended Edition). Springer, New York.

\bibitem[{Hironaka(1964)}]{Hironaka64}
Hironaka, H., January--March 1964. Resolution of singularities of an algebraic
  variety over a field of characteristic zero: {I, II}. Annals of Mathematics
  79~(1--2), 109--326.

\bibitem[{Kredel and Weispfenning(1988)}]{Kredel88}
Kredel, H., Weispfenning, V., October 1988. Computing dimension and independent
  sets for polynomial ideals. Journal of Symbolic Computation 6~(2-3),
  231--247.

\bibitem[{Lazard(1983)}]{Lazard83}
Lazard, D., March 1983. Gr\"obner bases, {Gaussian} elimination and resolution
  of systems of algebraic equations. In: {Proceedings of EUROCAL'83 (European
  Computer Algebra Conference London)}. Lecture Notes in Computer Science 162,
  Springer, England, pp. 146--156.

\bibitem[{Liang(2014)}]{Liang14}
Liang, Y., 2014. Solving polynomial equations with equation constraints: the
  zero-dimensional case. \url{arxiv.org/pdf/1408.3639v1.pdf}.

\bibitem[{Mora(1982)}]{Mora82}
Mora, F., April 1982. An algorithm to compute the equations of tangent cones.
  In: {Proceedings of EUROCAM82 (European Computer Algebra Conference
  Marseille)}. Lecture Notes in Computer Science 144, Springer, France, pp.
  158--165.

\bibitem[{Mora and Rossi(1995)}]{Mora95}
Mora, T., Rossi, M.~E., 1995. An algorithm for the {Hilbert-Samuel} function of
  a primary ideal. Communications in Algebra 23~(5), 1899--1911.

\bibitem[{Robbiano(1985)}]{Robbiano1985}
Robbiano, L., April 1985. Term orderings on the polynomial ring. In:
  {Proceedings of EUROCAL'85 (European Conference on Computer Algebra Linz)}.
  Lecture Notes in Computer Science 204, Springer, Austria, pp. 513--517.

\end{thebibliography}

\end{document}